\newtheoremstyle{mytheoremstyle}{3pt}{3pt}{\itshape}{}{\bf}{.}{.3em}{} 
\theoremstyle{mytheoremstyle}
\newtheorem{theorem}{Theorem}
\newcommand\nc\newcommand
\nc\bfa{{\boldsymbol a}}\nc\bfA{{\boldsymbol A}}\nc\cA{{\mathscr A}}
\nc\bfb{{\boldsymbol b}}\nc\bfB{{\boldsymbol B}}\nc\cB{{\mathscr B}}
\nc\bfc{{\boldsymbol c}}\nc\bfC{{\boldsymbol C}}\nc\cC{{\mathscr C}}
\nc\bfd{{\boldsymbol d}}\nc\bfD{{\boldsymbol D}}\nc\cD{{\mathscr D}}
\nc\bfe{{\boldsymbol e}}\nc\bfE{{\boldsymbol E}}\nc\cE{{\mathscr E}}
\nc\bff{{\boldsymbol f}}\nc\bfF{{\boldsymbol F}}\nc\cF{{\mathscr F}}
\nc\bfg{{\boldsymbol g}}\nc\bfG{{\boldsymbol G}}\nc\cG{{\mathscr G}}
\nc\bfh{{\boldsymbol h}}\nc\bfH{{\boldsymbol H}}\nc\cH{{\mathscr H}}
\nc\bfi{{\boldsymbol i}}\nc\bfI{{\boldsymbol I}}\nc\cI{{\mathcal I}}
\nc\bfj{{\boldsymbol j}}\nc\bfJ{{\boldsymbol J}}\nc\cJ{{\mathscr J}}
\nc\bfk{{\boldsymbol k}}\nc\bfK{{\boldsymbol K}}\nc\cK{{\mathscr K}}
\nc\bfl{{\boldsymbol l}}\nc\bfL{{\boldsymbol L}}\nc\cL{{\mathscr L}}
\nc\bfm{{\boldsymbol m}}\nc\bfM{{\boldsymbol M}}\nc\cM{{\mathscr M}}\nc\fM{{\mathfrak M}}
\nc\bfn{{\boldsymbol n}}\nc\bfN{{\boldsymbol N}}\nc\cN{{\mathscr N}}
\nc\bfo{{\boldsymbol o}}\nc\bfO{{\boldsymbol O}}\nc\cO{{\mathscr O}}
\nc\bfp{{\boldsymbol p}}\nc\bfP{{\boldsymbol P}}\nc\cP{{\mathscr P}}\nc\eP{{\EuScriptP}}\nc\fP{{\mathfrak P}}
\nc\bfq{{\boldsymbol q}}\nc\bfQ{{\boldsymbol Q}}\nc\cQ{{\mathscr Q}}
\nc\bfr{{\boldsymbol r}}\nc\bfR{{\boldsymbol R}}\nc\cR{{\mathscr R}}
\nc\bfs{{\boldsymbol s}}\nc\bfS{{\boldsymbol S}}\nc\cS{{\mathscr S}}
\nc\bft{{\boldsymbol t}}\nc\bfT{{\boldsymbol T}}\nc\cT{{\mathscr T}}
\nc\bfu{{\boldsymbol u}}\nc\bfU{{\boldsymbol U}}\nc\cU{{\mathscr U}}
\nc\bfv{{\boldsymbol v}}\nc\bfV{{\boldsymbol V}}\nc\cV{{\mathscr V}}
\nc\bfw{{\boldsymbol w}}\nc\bfW{{\boldsymbol W}}\nc\cW{{\mathscr W}}
\nc\bfx{{\boldsymbol x}}\nc\bfX{{\boldsymbol X}}\nc\cX{{\mathscr X}}
\nc\bfy{{\boldsymbol y}}\nc\bfY{{\boldsymbol Y}}\nc\cY{{\mathscr Y}}
\nc\bfz{{\boldsymbol z}}\nc\bfZ{{\boldsymbol Z}}\nc\cZ{{\mathscr Z}}
\newtheorem{lemma}[theorem]{Lemma}
\newtheorem{proposition}[theorem]{Proposition}
\newtheorem{definition}{Definition}
\newtheorem{construction}{Construction}[section]
\newtheorem{example}{Example}[section]
\theoremstyle{remark}
\newtheorem{remark}{Remark}[section]
\newcommand{\etal}{{\em et al.\ }}
\newcommand{\ff}{{\mathbb F}}
\DeclareMathOperator{\wt}{wt}
\newcommand{\cev}[1]{\overset{{}_{\shortleftarrow}}{#1}}
\begin{document}
	
	\title{Coding methods for string reconstruction from erroneous prefix-suffix compositions}
	
	\author{
		\IEEEauthorblockN{Zitan Chen}
		}
	\maketitle	
	
	{\renewcommand{\thefootnote}{}\footnotetext{
			
			\vspace{-.2in}
			
			\noindent\rule{1.5in}{.4pt}

			{	
				Z. Chen is with School of Science and Engineering, Future Networks of Intelligence Institute, The Chinese University of Hong Kong, Shenzhen, China (Email: chenztan@cuhk.edu.cn).
			}
		}
	}
	\renewcommand{\thefootnote}{\arabic{footnote}}
	\setcounter{footnote}{0}

	\begin{abstract} 
		The number of zeros and the number of ones in a binary string are referred to as the composition of the string, and the prefix-suffix compositions of a string are a multiset formed by the compositions of the prefixes and suffixes of all possible lengths of the string. In this work, we present binary codes of length $n$ in which every codeword can be efficiently reconstructed from its erroneous prefix-suffix compositions with at most $t$ composition errors. All our constructions have decoding complexity polynomial in $n$ and the best of our constructions has constant rate and can correct $t=\Theta(n)$ errors. As a comparison, no prior constructions can afford to efficiently correct $t=\Theta(n)$ arbitrary composition errors.
		
		Additionally, we propose a method of encoding $h$ arbitrary strings of the same length so that they can be reconstructed from the multiset union of their error-free prefix-suffix compositions, at the expense of $h$-fold coding overhead. In contrast, existing methods can only recover $h$ distinct strings, albeit with code rate asymptotically equal to $1/h$. Building on the top of the proposed method, we also present a coding scheme that enables efficient recovery of $h$ strings from their erroneous prefix-suffix compositions with $t=\Theta(n)$ errors.

	\end{abstract}
	
	
	\section{Introduction}
	Recent advancement in macromolecular technology has brought polymer-based storage systems \cite{al2017mass,launay2021precise} close to reality. However, one of the major problems that must be addressed in order to make such systems practical is the problem of data retrieval based on sequencing, or rather polymer reconstruction from their partial information. 
	
	Motivated by mass spectrometry sequencing \cite{chen2020bioinformatics}, under proper assumptions, one may represent polymers by binary strings and the problem of polymer reconstruction can be turned into string reconstruction from {substring} compositions, i.e., from the number of zeros and the number of ones in substrings of every possible length. 
	Note that a string and its reversal have the same substring compositions and thus are indistinguishable. The pioneering work of \cite{acharya2015string} characterized the length for which strings can be uniquely reconstructed from their substring compositions up to reversal.
	Extending the work of \cite{acharya2015string}, the authors of \cite{pattabiraman2023coding} and \cite{banerjee2023insertion} studied the problem of string reconstruction from {erroneous} substring compositions. Specifically, Pattabiraman \etal \cite{pattabiraman2023coding} designed coding schemes capable of reconstructing strings in the presence of {substitution} errors, and Banerjee \etal \cite{banerjee2023insertion} further proposed codes that can deal with {insertion} and {deletion} errors.
	
	In addition to identifying strings that allow unique reconstruction and error correction, low complexity algorithms for string reconstruction are of great interest, as efficiency of data retrieval is crucial for practical polymer-based storage systems. In the case of reconstruction from error-free substring compositions, the work of \cite{acharya2015string} described a backtracking algorithm for strings of length $n$ with worst-case time complexity exponential in $\sqrt{n}$. Moreover, the subsequent works \cite{pattabiraman2023coding,gupta2025new} constructed sets of strings that can be uniquely reconstructed with time complexity polynomial in $n$.
	For reconstruction in the presence of substitution composition errors, the work \cite{pattabiraman2023coding} showed that when the number of errors is a constant independent of $n$, there exist coding schemes with decoding complexity polynomial in $n$.

	Observing that it may not be realistic to assume the compositions of all substrings are available, the authors of \cite{gabrys2022reconstruction} initiated the study of string reconstruction based on the compositions of {prefixes and suffixes} of all possible lengths. In fact, the authors of \cite{gabrys2022reconstruction} considered the more general problem of reconstructing {multiple} distinct strings of the same length simultaneously from the compositions of their prefixes and suffixes. The main result of \cite{gabrys2022reconstruction} reveals that for reconstruction of no more than $h$ distinct strings of the same length, there exists a family of codes with rate approaching $1/h$ asymptotically. The idea that is instrumental in constructing such codes is the use of binary $B_h$-sequences, i.e., a set of strings with the property that for any two distinct subsets of strings of size at most $h$, the real-valued sums of all the strings in each of the two subsets are different.
	Following the work \cite{gabrys2022reconstruction}, the authors of \cite{ye2024reconstruction} studied in depth the problem of reconstructing a single string from the compositions of its prefixes and suffixes. In particular, their work completely characterized the strings that can be reconstructed from its prefixes and suffixes compositions uniquely up to reversal.
	Both works \cite{gabrys2022reconstruction,ye2024reconstruction} also presented sets of strings that can be efficiently reconstructed. 
	More recently, the work of \cite{yang2025reconstruction} considered the problem of reconstructing $h$ strings that are not necessarily distinct but have the same weight from their error-free compositions of prefixes and suffixes, and derived necessary and sufficient conditions for which unique reconstruction up to the reversal is possible. For reconstruction of multiple distinct strings of constant weight, we note that the recent work \cite{sima2023constant} studied the largest possible set of constant-weight binary $B_2$-sequences, which can be used in the framework of \cite{gabrys2022reconstruction} to construct constant-weight codes in which any two distinct strings can be reconstructed uniquely from the compositions of their prefixes and suffixes.
	
	\subsection{Our results}
	In this paper, we consider the problem of string reconstruction from erroneous prefix-suffix compositions resulted from general composition errors that include any combination of insertion, deletion and substitution of compositions, and present construction of codes that allow efficient reconstruction from erroneous prefix-suffix compositions. We note that the prior works \cite{gabrys2022reconstruction,ye2024reconstruction} also studied string reconstruction from erroneous prefix-suffix compositions and presented coding schemes capable of correcting certain composition errors. However, the coding schemes proposed in these works can only cope with limited types of errors. Specifically, the work \cite{gabrys2022reconstruction} discussed how to reconstruct strings even if there are missing compositions (i.e., deletion errors) in the prefix-suffix compositions, and the work \cite{ye2024reconstruction} further studied codes that are able to handle a special type of substitution errors caused by a reduced number of ones in some compositions.

	Using generalized Reed-Solomon (GRS) codes as the main building block, we first present a basic construction of linear codes of length $n$ over a prime field. Developing from this construction, we further propose several binary codes of length $n$ that are capable of correcting at most $t$ composition errors. The decoding of these codes can be accomplished by variants of the decoders for GRS codes, and thus all of them have decoding complexity polynomial in $n$. Moreover, utilizing asymptotically good binary codes that are efficiently constructable and decodable, we present a construction of binary codes with constant rate that can correct $t=\Theta(n)$ composition errors and recover the underlying string efficiently.
	
	Going beyond reconstruction of a single string, we also study the problem of reconstructing $h>1$ strings from the multiset of their prefix-suffix compositions. Specifically, we describe a method of jointly coding $h$ arbitrary strings of the same length so that they can be efficiently reconstructed from their error-free prefix-suffix composition. The rate of the resulting code is $1/(h+1)$. As noted before, the work of \cite{gabrys2022reconstruction} presented a family of codes with rate approaching $1/h$ asymptotically. However, the codes in \cite{gabrys2022reconstruction} can only afford reconstruction of $h$ distinct strings. Finally, we extend this coding method by incorporating it with asymptotically good binary codes to enable reconstruction in the presence of composition errors, giving rise to a construction of codes with constant rate that can efficiently correct $t=\Theta(n)$ composition errors. 
	
	\subsection{Organization}
	The rest of this paper is organized as follows. 
In Section~\ref{sec:pre}, we introduce our notation, describe the problem setting, and review some known results in coding theory that are useful for subsequent discussion. Section~\ref{sec:con} is devoted to code constructions. Specifically, coding schemes for reconstructing a single string from its erroneous prefix-suffix compositions is presented in Section~\ref{sec:con-single} while coding schemes for reconstructing multiple strings are present in Section~\ref{sec:con-multiple}. Finally, we conclude this paper and mention a few problems for further investigation in Section~\ref{sec:re}.
	
	\section{Preliminaries}\label{sec:pre}
	
	Let $n$ be a positive integer. 
	Let $\bm{c}=c_1c_2\ldots c_n \in \{0,1\}^n$ be a binary string of length $n$, and the {\emph{reversal}} of $\bm{c}$ is denoted by $\cev{\bm{c}}:=c_nc_{n-1}\ldots c_1$. The {\emph{weight}} of $\bm{c}$ is the number of ones in $\bm{c}$, denoted by $\wt(\bm{c})$. The {\emph{composition}} of $\bm{c}$ is formed by the number of zeros and the number of ones in $\bm{c}$. More precisely, the ordered pair $(n-\wt(\bm{c}),\wt(\bm{c}))$ is called the composition of $\bm{c}$, where $n$ and $\wt(\bm{c})$ are referred to as the \emph{size} and \emph{mass} of the composition, respectively.
	For $1\leq l\leq n$, the length-$l$ prefix and the length-$l$ suffix of $\bm{c}$ are denoted by $\bm{c}[l]$ and $\cev{\bm{c}}[l]$, respectively. 
	{We	will use ``$\cup$'' to denote both the set union and the multiset union. The exact meaning of ``$\cup$'' will be clear from the context.}
	
	\begin{definition}\label{def:M}
		The {set} of compositions of all prefixes of a string $\bm{c}\in\{0,1\}^n$ is called the prefix {compositions} of $\bm{c}$, denoted by $P(\bm{c})$. More precisely, 
		\begin{align*}
			P(\bm{c})=\{ (j-\wt(\bm{c}[j]),\wt(\bm{c}[j])) \mid 1\leq j\leq n\}.
		\end{align*}
		The suffix {compositions} of $\bm{c}$ are {similarly defined to be}
		\begin{align*}
			P(\cev{\bm{c}})=\{ (j-\wt(\cev{\bm{c}}[j]),\wt(\cev{\bm{c}}[j])) \mid 1\leq j\leq n\}.
		\end{align*} 
		The prefix-suffix compositions of $\bm{c}$ are defined to be the multiset union of $P(\bm{c})$ and $P(\cev{\bm{c}})$, denoted by $M(\bm{c})$. 
	\end{definition}
	
	Let $\fM_n$ be the collection of multisets of finite cardinality in which each element is an ordered pair $(a,b)\in \{0,1,\ldots,n\}^2$ such that $1\leq a+b\leq n$. Clearly, $M(\bm{c})\in\fM_n$. For each multiset $X\in \fM_n$, we will also write $X$ as an $n$-tuple $(X_j)_{1\leq j\leq n}$ where $X_j:=\{(a,b)\in X\mid a+b = j\}$ is a multiset. For $X,Y\in \fM_n$, the \emph{distance} between $X,Y$ is defined to be
	\begin{align*}
		d(X,Y):=|\{j\mid X_j\neq Y_j\}|.
	\end{align*}
	For two sequences of numbers $\bm{x}=(x_j)_{1\leq j\leq n},\bm{y}=(y_j)_{1\leq j\leq n}$, we also use $d(\bm{x},\bm{y})$ to denote the \emph{Hamming distance} between $\bm{x}$ and $\bm{y}$.
	
	Let $Y\in\fM_n$, $(a,b)\in M(\bm{c})$, and $(a',b')\in \{0,1,\ldots,n\}^2$ be such that $1\leq a'+b'\leq n$. We say that $Y$ has a composition insertion error with respect to $M(\bm{c})$ if $Y$ is the multiset union $M(\bm{c})\cup \{(a',b')\}$; $Y$ has a composition deletion error if $Y=M(\bm{c})\setminus\{(a,b)\}$; $Y$ has a composition substitution error if $Y=(M(\bm{c})\setminus\{(a,b)\})\cup\{(a',b')\}$.
	In all these cases, $Y$ is said to have a single composition error and we have $d(M(\bm{c}),Y)=1$.
	More generally, let $t$ be a positive integer. We say that $Y$ is a multiset of erroneous prefix-suffix compositions resulting from $t$ errors in $M(\bm{c})$ if $d(M(\bm{c}),Y)= t$. Note that this definition takes into account any combination of insertion, deletion and substitution composition errors. Moreover, under this definition, inserting (or deleting) multiple compositions of the same size are counted as one error.
	
	As observed in the earlier works such as \cite{acharya2015string,pattabiraman2023coding}, string reconstruction from compositions becomes more manageable if the strings satisfy the property that the weight of every prefix of length up to half the string is smaller (or larger) than the weight of the suffix of the same length. In particular, the work of \cite{pattabiraman2023coding} presented a construction of codes with this property, stated below.
	
	\begin{lemma}[\cite{pattabiraman2023coding}]\label{le:ordered}
		There exists a code $\cS_n\subset\{0,1\}^n$ with $n \geq 8$ and redundancy $O(\log n)$ such that for any $\bm{c}\in\cS_n$ it holds that $\wt(\bm{c}[j])\leq \wt(\cev{\bm{c}}[j])$ for all $j\leq n/2$. Moreover, the code can be encoded and decoded in time polynomial in $n$.
	\end{lemma}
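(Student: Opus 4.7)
The plan is to take $\cS_n$ to be the full set of binary strings satisfying the weight inequality of the lemma, and to establish both the $O(\log n)$ redundancy bound and the efficient encoder/decoder by a ranking/unranking scheme over a small-state dynamic program. First, I would recast the target condition as non-negativity of a discrete walk: define $D_j(\bm{c}) := \wt(\cev{\bm{c}}[j]) - \wt(\bm{c}[j])$, so that $D_0 = 0$ and $D_j - D_{j-1} = c_{n-j+1} - c_j \in \{-1, 0, +1\}$. The lemma's condition then asks that $D_j \geq 0$ for every $j \leq \lfloor n/2 \rfloor$, i.e., that the walk $(D_j)_{0 \leq j \leq \lfloor n/2 \rfloor}$ stays non-negative.

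When $\bm{c}$ is uniform on $\{0,1\}^n$, each pair $(c_j, c_{n-j+1})$ with $j \leq \lfloor n/2 \rfloor$ is uniform on $\{0,1\}^2$, so the increments of $(D_j)$ are $0$ with probability $\tfrac{1}{2}$ and $\pm 1$ each with probability $\tfrac{1}{4}$, i.i.d.\ across $j$. By a ballot-theorem / reflection-principle argument applied to this lazy symmetric walk, the walk remains non-negative for $\lfloor n/2 \rfloor$ steps with probability $\Theta(1/\sqrt{n})$, whence $|\cS_n| = \Omega(2^n/\sqrt{n})$ and the redundancy is $\tfrac{1}{2}\log n + O(1) = O(\log n)$. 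For polynomial-time encoding I would use a standard ranking DP: let $N(j, v)$ be the number of assignments to the first $j$ pairs $(c_i, c_{n-i+1})$, $i \leq j$, that keep the walk non-negative and leave it at height $v$. The recurrence $N(j,v) = 2\, N(j-1,v) + N(j-1,v-1) + N(j-1,v+1)$, with $N(j,\cdot) = 0$ outside the non-negative range, can be filled in polynomial time and space. The encoder iterates $j = 1, \ldots, \lfloor n/2 \rfloor$, consulting the table to pick among the (at most four) allowed pairs consistent with the rank of the message; decoding reads off the pairs, recomputes the walk, and inverts the ranking. When $n$ is odd, the middle bit is free and carries one extra information symbol.

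The main obstacle is verifying that tracking the walk height $v$ alone is a sufficient statistic for the DP. This is true because the constraint $D_j \geq 0$ is prefix-closed and depends on the past only through $D_j$, so the Markov state is precisely $(j, v)$. A secondary point is to confirm that arithmetic on the $O(n)$-bit rank integers and the table lookups fit within the claimed polynomial budget, which is routine for enumeration-based encoders. The combinatorial content of the argument is concentrated in the ballot-type estimate $\Theta(2^n/\sqrt{n})$, which is what delivers the crucial $O(\log n)$ redundancy; everything else is algorithmic bookkeeping.
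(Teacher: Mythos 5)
The paper does not prove this lemma; it is imported verbatim from \cite{pattabiraman2023coding}, so there is no in-paper proof to compare against. Your argument is a correct, self-contained derivation: the reduction to non-negativity of the lazy walk $D_j$ is exact (the pairs $(c_j,c_{n-j+1})$, $j\leq\lfloor n/2\rfloor$, are disjoint, so the increments are genuinely i.i.d.), the ballot-type estimate gives $|\cS_n|=\Theta(2^n/\sqrt{n})$ and hence redundancy $\tfrac12\log n+O(1)$, and the recurrence $N(j,v)=2N(j-1,v)+N(j-1,v-1)+N(j-1,v+1)$ with $N(\cdot,v)=0$ for $v<0$ correctly supports polynomial-time enumerative encoding and decoding. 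This is the same ballot-problem/explicit-encoder spirit as the cited construction (which builds on Catalan--Bertrand-type strings), so I would count it as essentially the standard route; the one point worth writing out rather than asserting is the $\Theta(1/\sqrt{m})$ persistence probability for the \emph{lazy} walk, which follows by conditioning on the number of nonzero increments and applying the reflection principle to the induced simple walk.
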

	
	We note that if the weight condition $\wt(\bm{c}[j])\leq \wt(\cev{\bm{c}}[j])$ holds for every prefix of length up to half the string, i.e., $j\leq n/2$, then it is also satisfied by the prefixes of all possible lengths, as shown in the following proposition. This property is helpful in distinguishing prefix weights from suffix weights, and thus is instrumental in string reconstruction from prefix-suffix compositions.
	
	\begin{proposition}\label{prop:weight}
		Let $\bm{c}\in\{0,1\}^n$. If $\wt(\bm{c}[j])\leq \wt(\cev{\bm{c}}[j])$ for all $j\leq n/2$, then it holds that $\wt(\bm{c}[j])\leq \wt(\cev{\bm{c}}[j])$ for all $j\leq n$.
	\end{proposition}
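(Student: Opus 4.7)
The plan is to reduce the case $j > n/2$ to the hypothesized case $k \le n/2$ via the symmetry obtained from complementing a prefix within the full string. Concretely, for any $j$ with $1 \le j \le n$, the length-$j$ prefix $\bfc[j]$ and the length-$(n-j)$ suffix $\cev{\bfc}[n-j]$ partition the positions $\{1,\dots,n\}$, so
\begin{align*}
\wt(\bfc[j]) + \wt(\cev{\bfc}[n-j]) = \wt(\bfc),
\end{align*}
and similarly $\wt(\cev{\bfc}[j]) + \wt(\bfc[n-j]) = \wt(\bfc)$, because $\cev{\bfc}[j]$ and $\bfc[n-j]$ also partition the positions.

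First I would fix an integer $j$ with $n/2 < j \le n$ and set $k := n - j$, so that $0 \le k < n/2$. The case $k = 0$ (i.e., $j = n$) is immediate since both sides equal $\wt(\bfc)$. For $1 \le k < n/2$, subtracting the two identities above and rearranging yields
\begin{align*}
\wt(\cev{\bfc}[j]) - \wt(\bfc[j]) = \wt(\cev{\bfc}[k]) - \wt(\bfc[k]).
\end{align*}
Since $k < n/2$, the hypothesis gives $\wt(\bfc[k]) \le \wt(\cev{\bfc}[k])$, so the right-hand side is nonnegative, which yields $\wt(\bfc[j]) \le \wt(\cev{\bfc}[j])$ as desired.

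There is no real obstacle: the only care needed is to ensure that every $j$ in the range $(n/2, n]$ is covered, which it is, since $k = n-j$ then lies in $[0, n/2)$ and falls under either the trivial $j=n$ case or the hypothesis. Combined with the hypothesis itself for $j \le n/2$, this completes the proof for all $1 \le j \le n$.
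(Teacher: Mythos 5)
Your proof is correct and uses essentially the same idea as the paper's: both exploit the complementary-partition identities $\wt(\bm{c}[j]) + \wt(\cev{\bm{c}}[n-j]) = \wt(\bm{c})$ and $\wt(\cev{\bm{c}}[j]) + \wt(\bm{c}[n-j]) = \wt(\bm{c})$ (the paper writes them in prefix-sum form as $s_j = s_n - \bar{s}_{n-j}$ and $\bar{s}_j = s_n - s_{n-j}$) to transfer the hypothesis from $k = n - j \le n/2$ to $j > n/2$.
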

	\begin{proof}
		For $j=1,\ldots,n$, let $s_j=\wt(\bm{c}[j])$ and $\bar{s}_j=\wt(\cev{\bm{c}}[j])$. 
		Define $s_0=\bar{s}_0=0$. Then we have $\bar{s}_j=s_n-s_{n-j}$ and $s_j=s_n-\bar{s}_{n-j}$ for $j=0,\ldots,n$.
		Since $s_j\leq \bar{s}_j$ for all $j\leq n/2$, we have $s_n-\bar{s}_{n-j}\leq s_n-s_{n-j}$ for all $j\leq n/2$.  
		It follows that $s_j\leq \bar{s}_j$ for all $j\leq n$. 
	\end{proof}
	
	We next mention a few basic results on error-correcting codes. Let $\ff_q$ be a finite field of order $q$ where $q$ is a prime power denote $\ff_q^*=\ff_q\setminus\{0\}$. Let $\{\alpha_1,\ldots,\alpha_{n}\}\subset \ff_q$ and $(\omega_1,\ldots,\omega_n)\in(\ff_q^*)^n$. Let $k\leq n$ be a positive integer. An $(n,k)$ GRS code over $\ff_q$ is a linear code over $\ff_q$ with a parity check matrix given by 
	\begin{align*}
		\begin{pmatrix}
			\omega_1 & \dots & \omega_n\\
			\omega_1\alpha_1 & \dots & \omega_n\alpha_n\\
			\vdots & \ddots & \vdots\\
			\omega_1\alpha_1^{n-k-1} & \dots & \omega_n\alpha_n^{n-k-1}
		\end{pmatrix}.
	\end{align*}
	The elements $\alpha_1,\ldots,\alpha_n$ are called evaluation points and the elements $\omega_1,\ldots,\omega_n$ are called column multipliers. GRS codes are known to has minimum distance $n-k+1$ and there are many decoding algorithms for correcting $t$ errors with time complexity polynomial in $n$ if $t$ is smaller than half of the minimum distance. We will use GRS codes as building blocks for some of our code constructions later.
	
	Another class of codes we will use is the binary BCH codes. They are also known to have polynomial-time decoding algorithms that correct $t$ errors if $t$ is smaller than half of the designed distance. We will need the following result for our code construction later.
	
	\begin{lemma}[\cite{macwilliams1977theory}]\label{le:bch}
		Let $m$ be a positive integer and $2t-1\leq 2^{\lceil m/2\rceil}+1$.
		There exists a BCH code $\cB_n\subset\{0,1\}^n$ of length $n=2^m-1$, dimension $n-mt$, and designed distance $2t+1$.
	\end{lemma}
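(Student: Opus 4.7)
The plan is to construct $\cB_n$ as a narrow-sense binary BCH code and then verify the claimed parameters using the BCH bound together with a cyclotomic coset count. Fix a primitive element $\alpha\in\ff_{2^m}$, which is a primitive $n$-th root of unity for $n=2^m-1$. For each $i\geq 1$, let $M_i(x)\in\ff_2[x]$ denote the minimal polynomial of $\alpha^i$ over $\ff_2$, and set
\[
g(x)=\lcm\bigl(M_1(x),M_2(x),\ldots,M_{2t}(x)\bigr).
\]
Take $\cB_n$ to be the cyclic code of length $n$ generated by $g(x)$. Since $\alpha,\alpha^2,\ldots,\alpha^{2t}$ are $2t$ consecutive roots of every codeword polynomial, the BCH bound immediately gives that $\cB_n$ has minimum distance at least $2t+1$, which is its designed distance by construction.

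The next step is to compute $\deg(g(x))$, since $\dim\cB_n=n-\deg(g(x))$. Because $\ch\ff_2=2$, the Frobenius map yields $M_{2i}(x)=M_i(x)$, so the even-indexed factors are redundant and
\[
g(x)=\lcm\bigl(M_1(x),M_3(x),\ldots,M_{2t-1}(x)\bigr).
\]
Each $M_{2i-1}(x)$ has degree equal to the size of the $2$-cyclotomic coset $C_{2i-1}=\{(2i-1)\cdot 2^k\bmod n:k\geq 0\}$, which divides $m$. Consequently $\deg(g(x))\leq mt$, and the proof reduces to showing that this bound is tight, i.e., the cosets $C_1,C_3,\ldots,C_{2t-1}$ are pairwise disjoint and each has cardinality exactly $m$.

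The main obstacle is precisely this coset analysis, and it is where the hypothesis $2t-1\leq 2^{\lceil m/2\rceil}+1$ enters. Suppose $1\leq i<j\leq 2t-1$ are odd with $j\equiv 2^ki\pmod n$ for some $k\geq 0$. Writing $2^k i=j+\ell n$ and splitting into the cases $\ell=0$ and $\ell\geq 1$, one uses the bound on $j$ to bound $2^k i$ and exploits the exponential gap between $2^{\lceil m/2\rceil}+1$ and $n=2^m-1$ to exclude all nontrivial possibilities; the case $\ell=0$ is ruled out by the parity of $j$ whenever $k\geq 1$. A parallel argument excludes $|C_{2i-1}|<m$, by showing that $2^{\ell}(2i-1)\not\equiv 2i-1\pmod n$ for any proper divisor $\ell$ of $m$. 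Combining both facts yields $\deg(g(x))=mt$, hence $\dim\cB_n=n-mt$. Since this is a classical result quoted from \cite{macwilliams1977theory}, the coset calculation can in practice be replaced by a direct citation to the reference.
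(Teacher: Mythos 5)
The paper offers no proof of this lemma---it is stated as a citation to MacWilliams and Sloane---so there is no in-paper argument to compare against. Your proposal reproduces the standard textbook derivation: take the narrow-sense primitive BCH code with zeros $\alpha,\alpha^2,\ldots,\alpha^{2t}$, get the designed distance $2t+1$ from the BCH bound, reduce $\deg g$ to the odd cyclotomic cosets $C_1,C_3,\ldots,C_{2t-1}$ via the Frobenius identity $M_{2i}=M_i$, and conclude $\dim=n-mt$ provided those $t$ cosets are pairwise distinct and each has size exactly $m$. This decomposition is correct and is exactly how the cited source proceeds.

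The thin spot is the coset analysis, which is precisely where the hypothesis $2t-1\le 2^{\lceil m/2\rceil}+1$ enters and hence where the real content of the lemma lies. Your parity observation disposes of the $\ell=0$ case cleanly, but the $\ell\ge 1$ case and the full-size claim $|C_{2i-1}|=m$ are only gestured at (``exploits the exponential gap''), and the gesture as written does not obviously close: from $2^k i=j+\ell n$ with $\ell\ge 1$ one gets $2^k i\ge 2^m$, which bounds $k$ from below but does not by itself produce a contradiction. A complete argument typically works with the $m$-bit binary representation of $s\in\{1,\ldots,2t-1\}$: the condition $2t-1\le 2^{\lceil m/2\rceil}+1$ forces the high-order $\lfloor m/2\rfloor$ bits of $s$ to vanish, and one shows any nontrivial cyclic shift (which is what multiplication by $2^k$ modulo $2^m-1$ does) pushes a one into the high half, making the resulting value exceed $2t-1$ and therefore unable to equal another coset representative in range or to equal $s$ itself for a shift by a proper divisor of $m$. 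You are upfront that you are deferring to the citation for this calculation, which is acceptable given the paper does the same, but be aware that the sketch as stated would need this sharpening to stand on its own.
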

	
	We will also be using asymptotically good codes over the binary field $\ff_2$. A family of $(n,k)$ linear codes over $\ff_q$ with minimum distance $d$ is said to be asymptotically good if the rate $R:=k/n$ and the growth rate of the minimum distance relative to the block length, i.e., the relative distance $d/n$, are non-vanishing as $n$ goes to infinity. In other words, the rate and relative distance of asymptotically good codes are both positive constants independent of the block length. Examples of explicit construction of asymptotically good codes over the binary field include expander codes \cite{sipser1996expander} and concatenated codes \cite{forney1965concatenated}. These codes are known to have polynomial-time decoding algorithms that correct a number of errors that is a constant fraction of the block length \cite{zemor2001expander}, \cite{forney1966generalized}. For the purposes of this paper, we will simply assume the availability of explicit asymptotically good binary codes with efficient decoding algorithms. For the ease of reference, let us summarize these results in the following lemma.
	
	\begin{lemma}\label{le:good}
		There exists an explicit family of linear codes $\cA_n \subset \{0,1\}^n$ with constant rate that can correct $t=\Theta(n)$ errors in time polynomial in $n$. 
	\end{lemma}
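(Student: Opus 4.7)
The plan is to invoke one of the standard explicit constructions of asymptotically good binary codes, rather than to build one from scratch. The most direct route is Forney's binary concatenated codes, which combine a Reed--Solomon outer code with a short binary linear inner code. I would first fix a constant outer rate $R_{\mathrm{out}}\in(0,1)$ and take, for each positive integer $m$, an $(N,R_{\mathrm{out}}N)$ Reed--Solomon code over $\ff_{2^m}$ with $N=2^m$; by the Singleton bound met with equality, its relative distance is $1-R_{\mathrm{out}}$.

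Next I would pick an inner binary linear code of block length $m/R_{\mathrm{in}}$, dimension $m$, and relative distance $\delta_{\mathrm{in}}>0$, for some constants $R_{\mathrm{in}},\delta_{\mathrm{in}}$ meeting the Gilbert--Varshamov bound. Such a code exists by a counting argument and, crucially, can be located by exhaustive search over all generator matrices in time $2^{O(m^2)}=\mathrm{poly}(N)$, so it does not destroy the explicitness or efficiency of the overall construction. Concatenating the two yields a binary linear code $\cA_n$ of length $n=Nm/R_{\mathrm{in}}$, rate $R_{\mathrm{out}}R_{\mathrm{in}}$, and relative distance at least $(1-R_{\mathrm{out}})\delta_{\mathrm{in}}$, all of which are positive constants independent of $n$. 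For lengths $n$ that are not of this special form, I would pad or truncate by a constant number of coordinates, which does not affect the asymptotic parameters.

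For decoding, I would invoke Forney's Generalized Minimum Distance (GMD) decoder: each inner block is decoded up to half its minimum distance in $\mathrm{poly}(m)$ time, the resulting symbols (with reliability scores) are fed to an errors-and-erasures decoder for the outer Reed--Solomon code, and the overall procedure corrects up to $\tfrac{1}{2}(1-R_{\mathrm{out}})\delta_{\mathrm{in}}\cdot n$ errors in time polynomial in $n$. Thus $t=\Theta(n)$ errors can be corrected efficiently. As an alternative, one could instead cite the expander-code construction of Sipser--Spielman together with its linear-time decoder, using any explicit family of bipartite expanders.

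Since the result is essentially a repackaging of classical coding theory, there is no genuine obstacle to overcome; the only minor point requiring care is ensuring that the inner code, which is chosen by brute force, is produced in time polynomial in the overall block length $n$, which is immediate because its alphabet size grows only logarithmically in $n$. The role of the lemma here is merely to fix notation and guarantee the availability of such codes for the constructions that follow.
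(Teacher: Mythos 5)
The paper itself gives no proof of Lemma~\ref{le:good}; it simply cites expander codes (Sipser--Spielman, with Z\'emor's decoder) and concatenated codes (Forney) and records the statement as a lemma for later reference. Your write-up fills in the concatenated-code route, which is one of the two routes the paper points to, so there is no disagreement in strategy. There is, however, a genuine gap in the complexity claim for the inner code. With $N=2^m$, a brute-force search over all $m\times(m/R_{\mathrm{in}})$ binary generator matrices takes $2^{\Theta(m^2)}$ time, whereas $\mathrm{poly}(N)=2^{O(m)}$, so the asserted identity $2^{O(m^2)}=\mathrm{poly}(N)$ is false; since $m=\Theta(\log n)$, your search runs in $n^{\Theta(\log n)}$ time, which is quasi-polynomial, not polynomial, in the overall block length $n$. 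Consequently the construction as written is not ``explicit'' in the usual polynomial-time sense.

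The standard repair is to replace the exhaustive search by the Wozencraft ensemble: a family of only $2^m-1$ rate-$1/2$ binary linear codes of length $2m$, indexed by $\alpha\in\ff_{2^m}^{*}$, a constant fraction of which meet the Gilbert--Varshamov bound. The minimum distance of each member can be checked by enumerating its $2^m$ codewords, so a good inner code is found in $2^{O(m)}=\mathrm{poly}(n)$ total time; Justesen's construction avoids even this search by using a different Wozencraft code on each outer coordinate. Alternatively, the expander-code route you mention as a fallback sidesteps the issue entirely and moreover yields linear-time decoding. A smaller slip: brute-force nearest-codeword decoding of the inner code costs $2^m\cdot\mathrm{poly}(m)$ per block, not $\mathrm{poly}(m)$ as you state; but as there are only $N=2^m$ inner blocks, the total inner-decoding time is still $\mathrm{poly}(n)$, so that part of the conclusion survives.
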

		
	\section{Code constructions}\label{sec:con}
	
	\subsection{Reconstruction of a single string}\label{sec:con-single}
	In this subsection we present coding schemes for reconstructing a single string from its erroneous prefix-suffix compositions. To begin with, let us present a construction of linear codes over a prime field from which binary codes capable of correcting $t$ composition errors can be derived. The basic idea underlying this construction is to encode vectors over a large enough prime field so that the prefix sums of each vector form a codeword in a GRS code. As we are interested in reconstruction of binary strings, let us consider the binary vectors of this GRS code. Given the prefix-suffix compositions, i.e., the prefix sums and suffix sums, of the binary vectors, if one can further distinguish the prefix sums from the suffix sums, then the error-correcting capability of the GRS code will allow one to recover the prefix sums even if there are composition errors. With the correct values of the prefix sums, one can then easily deduce the corresponding binary vector.
	
	\begin{construction}\label{con:basic}
		Let $t< n/2$ be a positive integer. Let $\ff_p$ be a finite field of order $p$ where $p\geq n+1$ is a prime. Let $\{\alpha_1,\dots,\alpha_n\}\subset\ff_p$ and $(\omega_j)_{1\leq j\leq n}\in (\ff_p^*)^n$. 
		Let $\cD_n:=\{(c_j)_{1\leq j\leq n}\}$ be the linear code over $\ff_p$ that satisfies the following parity-check equations: 
		\begin{align}
			\sum_{j=1}^{n}\sum_{i=j}^{n}\omega_i\alpha_i^l c_j=0,\quad l=0,\ldots,2t-1.\label{eq:pc}
		\end{align}
	\end{construction}
	
	Below we consider the largest binary subcode contained in $\cD_n$. There is no efficient description of this subcode as one may have to enumerate every binary string of length $n$ in order to determine its membership in $\cD_n$. The next lemma shows if the prefix weights of the strings in this subcode can be distinguished from the suffix weights, then it is possible to correct composition errors and recover the underlying string.
	
	\begin{lemma}\label{le:basic}
		Let $\bm{c}\in \cD_n\cap\{0,1\}^n$ where $\cD_n$ is given in Construction~\ref{con:basic}. Let $Y\in\fM_n$ be any erroneous prefix-suffix compositions resulting from at most $t$ errors in $M(\bm{c})$. If $\wt(\bm{c}[j])\leq \wt(\cev{\bm{c}}[j])$ for all $j\leq n/2$, then one can recover $\bm{c}$ from $Y$.
	\end{lemma}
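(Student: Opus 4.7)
The plan rests on the observation that by swapping the order of summation in \eqref{eq:pc}, the parity-check equations become
\begin{align*}
\sum_{i=1}^{n}\omega_i\alpha_i^l\, s_i \;=\; 0,\qquad l=0,1,\ldots,2t-1,
\end{align*}
where $s_i:=\wt(\bm{c}[i])$ denotes the length-$i$ prefix weight of $\bm{c}$. Hence the integer vector $\bfs:=(s_1,\ldots,s_n)$, viewed as an element of $\ff_p^n$ (legitimate since $0\le s_i\le n<p$), is a codeword of the $(n,n-2t)$ GRS code $\cG_n$ over $\ff_p$ with evaluation points $\alpha_1,\ldots,\alpha_n$ and column multipliers $\omega_1,\ldots,\omega_n$. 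This code has minimum distance $2t+1$ and admits a standard polynomial-time decoder correcting up to $t$ Hamming errors.

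Next, I would extract from $Y$ an $\ff_p$-vector $\bfs'=(s'_1,\ldots,s'_n)$ by setting, for each $j$, $s'_j$ to be the smallest mass appearing in the multiset $Y_j$ (taking $s'_j:=0$ if $Y_j$ is empty). The hypothesis $\wt(\bm{c}[j])\le \wt(\cev{\bm{c}}[j])$ for $j\le n/2$ together with Proposition~\ref{prop:weight} yields $s_j\le \bar s_j$ for every $j\le n$, where $\bar s_j:=\wt(\cev{\bm{c}}[j])$. Consequently, for any size $j$ with $Y_j=M(\bm{c})_j$, the multiset $Y_j=\{(j-s_j,s_j),(j-\bar s_j,\bar s_j)\}$ is such that the smallest-mass rule returns precisely $s_j$. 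Therefore the coordinates where $s'_j\ne s_j$ are contained in $\{j\colon Y_j\ne M(\bm{c})_j\}$, a set of cardinality $d(M(\bm{c}),Y)\le t$, so $d(\bfs,\bfs')\le t$.

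The last step is to feed $\bfs'$ into the polynomial-time decoder of $\cG_n$ to recover $\bfs$ and then reconstruct $\bm{c}$ via $c_j = s_j - s_{j-1}$ with the convention $s_0:=0$; each difference is $0$ or $1$, which is unambiguous over $\ff_p$ since $p>n$.

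The main conceptual obstacle is designing the rule that turns the multiset $Y$ into the candidate vector $\bfs'$: one needs a deterministic rule that provably agrees with $s_j$ at every error-free size, so that incorrect values of $s'_j$ are confined to the at most $t$ positions actually corrupted. Proposition~\ref{prop:weight} is precisely what makes the \emph{smallest mass} rule correct, by forcing $s_j$ to be the smaller of the two true masses at every size; without this weight condition one could not disentangle prefix and suffix weights from $Y_j$ alone and the reduction to standard GRS decoding would break down.
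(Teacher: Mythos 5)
Your proposal is correct and matches the paper's argument step for step: the swap of summation order in \eqref{eq:pc} to exhibit the prefix-sum vector as a GRS codeword, the appeal to Proposition~\ref{prop:weight} so that the smallest-mass rule deterministically recovers $s_j$ at every error-free size, the bound $d(\bfs,\bfs')\le d(M(\bm{c}),Y)\le t$, and the final differencing $c_j=s_j-s_{j-1}$. The only cosmetic difference is that the paper first normalizes each $Y_j$ to have exactly two elements (padding with $(j,0)$ or truncating) before taking the smaller mass, whereas you apply the smallest-mass rule directly and set $s'_j=0$ when $Y_j$ is empty; both yield the same error bound.
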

	
	\begin{proof}
		Denote $X=M(\bm{c})\in\fM_n$ and note that $d(X,Y)\leq t$. Without loss of generality, we may assume $|Y_j|=2$ for $j=1,\ldots,n$. Indeed, we may discard $|Y_j|-2$ arbitrary instances in $Y_j$ for all $j$ such that $|Y_j|>2$ to obtain another $Y'\in \fM_n$ satisfying $d(X,Y')\leq t$. If $|Y_j|<2$, then adding $2-|Y_j|$ instances of $(j,0)$ produces another $Y''\in \fM_n$ satisfying $d(X,Y'')\leq t$.
		 
		For $j=1,\ldots,n$, let $s_j=\wt(\bm{c}[j])$ and $\bar{s}_j=\wt(\cev{\bm{c}}[j])$. 
		Since $\bm{c}\in\{0,1\}^n$, we have $s_j$ is the prefix sum of $\bm{c}=c_1c_2\ldots c_n$ of length $j$, and $\bar{s}_j$ is the corresponding suffix sum.
		Since $s_j\leq \bar{s}_j$ for all $j\leq n/2$, by Proposition~\ref{prop:weight} we have $s_j\leq \bar{s}_j$ for all $j\leq n$. 
		Thus, one can distinguish the prefix sum of length $j$ from the suffix sum of the same length based on their values. 
		 
		Define $\bm{x}=(s_j)_{1\leq j\leq n}$. 
		Moreover, let us write $Y_j=\{(a_j,b_j),(\bar{a}_j,\bar{b}_j)\}$ where $b_j\leq \bar{b}_j$ and define $\bm{y}=(b_j)_{1\leq j\leq n}$. 
		Since $s_j\leq \bar{s}_j$ for all $j\leq n$, it follows that if $X_j=Y_j$ then $s_j=b_j$. 
		Since $d(X,Y)\leq t$, the Hamming distance between $\bm{x}$ and $\bm{y}$ satisfies $d(\bm{x},\bm{y})\leq d(X,Y)\leq t$. In the following, we will show that $\bm{x}$ belongs to an $(n,n-2t)$ GRS code over $\ff_p$ with minimum distance $2t+1$, and thus one can recover $\bm{x}$ from $\bm{y}$ using a decoder for the GRS code.
		From \eqref{eq:pc}, for $l=0,\ldots, 2t-1$ we have 
		\begin{align}
			0
			&=\sum_{j=1}^{n}\sum_{i=j}^{n}\omega_i\alpha_i^l c_j 
			\nonumber\\
			&=\sum_{i=1}^{n} \omega_i\alpha_i^l \sum_{j=1}^{i} c_j 
			\label{eq:s}\\
			&=\sum_{i=1}^{n} \omega_i\alpha_i^l {s}_i\nonumber,
		\end{align} where \eqref{eq:s} follows by noticing ${s}_i=\sum_{j=1}^{i}c_j$. 
		Therefore, $\bm{x}=(s_j)_{1\leq j\leq n}$ is a codeword of an $(n,n-2t)$ GRS code with minimum distance $2t+1$, and we can decode $\bm{y}$ to $\bm{x}$ since $d(\bm{x},\bm{y})\leq t$.
		Furthermore, we can recover $\bm{c}$ since $c_j=s_{j}-s_{j-1}$ for $j=1,\ldots,n$, where $s_0:=0$. 
	\end{proof}
	
	\begin{remark}
		The redundancy required to enable correction of $t$ errors for the subcode of $\cD_n\cap\{0,1\}^n$  that satisfies the weight condition $\wt(\bm{c}[j])\leq \wt(\cev{\bm{c}}[j])$ is at least $2t$.
		We note that if the numbers of the composition errors of each type are known a priori, one can use less redundancy for the code construction. This is because insertion errors and deletion errors in $M(\bm{c})$ amount to erasures in the sense that one can identify such errors by examining $(Y_j)_{1\leq j\leq n}$. More precisely, if $|Y_j|\neq 2$, then one knows $Y_j$ is in error (due to insertions or deletions) and may simply discard the information provided by $Y_j$. Therefore, if one knows a priori that there will be at most $t_1>0$ coordinates of $Y$ that are erasures (i.e., $|Y_j|\neq 2$) and at most $t-t_1$ substitution errors (i.e., $|Y_j|=2$ but $Y_j\neq X_j$), then the redundancy required for error correction can be reduced to be at least $2(t-t_1)+t_1=2t-t_1$ and decoding can be accomplished by using an error-erasure decoder for GRS codes.
	\end{remark}
	
	Although Lemma~\ref{le:basic} shows the binary subcode of $\cD_n$ that satisfies certain weight condition is capable of correcting composition errors, it is not obvious how to construct such a subcode efficiently. In the following, let us take a different approach to constructing codes derived from $\cD_n$ that fulfill the weight condition.
	
	\begin{construction}\label{con:ext}
		Let $n_1$ be a positive integer and assume $t< n_1/2$.
		Let $\cD_{n_1}=\{(c_j)_{1\leq j\leq n_1}\}$ be the linear code of length $n_1$ over $\ff_p$ as given in Construction~\ref{con:basic}, where $p\geq n_1+1$.
		For each $(c_j)_{1\leq j\leq n_1}\in\cD_{n_1}$, let $\bm{c}_j$ be the length-$(2p-1)$ binary string of $c_j$ ones followed by $2p-1-c_j$ zeros if $1\leq j\leq \lceil n_1/2\rceil$, and $\bm{c}_j$ be the length-$(2p-1)$ binary string of $p+c_j$ ones followed by $p-1-c_j$ zeros if $\lceil n_1/2\rceil +1\leq j\leq n_1$.
		Let $n=n_1(2p-1)$ and define $\cC\subset \{0,1\}^{n}$ to be
		\begin{align*}
			\cC=\{(\bm{c}_1,\ldots,\bm{c}_{n_1})\mid \bm{c}\in\cD_{n_1}\}.
		\end{align*}
	\end{construction}
	
	\begin{theorem}
		Let $\bm{v}\in\cC$ where $\cC$ is given in Construction~\ref{con:ext}. If $Y\in\fM_{n}$ is any erroneous prefix-suffix compositions resulting from at most $t$ errors in $M(\bm{v})$, then one can recover $\bm{v}$ from $Y$.
	\end{theorem}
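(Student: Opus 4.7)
The plan is to imitate the proof of Lemma~\ref{le:basic}, but sampled only at the $n_1$ block-boundary positions $j = k(2p-1)$ for $1 \leq k \leq n_1$, since the structure of Construction~\ref{con:ext} makes the prefix weights at these indices directly reflect the underlying codeword $(c_1, \ldots, c_{n_1}) \in \cD_{n_1}$. Write $s_k^* := \wt(\bm{v}[k(2p-1)])$ and $\bar{s}_k^* := \wt(\cev{\bm{v}}[k(2p-1)])$. The proof splits into three steps: (i) show $s_k^* < \bar{s}_k^*$ at every block boundary for $k < n_1$, so that prefix compositions can be distinguished from suffix compositions at these positions; (ii) extract a length-$n_1$ integer sequence $\bm{y}$ from $Y$ that is within Hamming distance $t$ of $(s_k^*)_k$; (iii) decode using the GRS code hidden inside $\cD_{n_1}$.

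For step (i), when $k \leq \lfloor n_1/2 \rfloor$, the prefix $\bm{v}[k(2p-1)]$ consists of $k$ light blocks (each of weight $\leq p-1$), while the corresponding suffix consists of $k$ heavy blocks (each of weight $\geq p$), giving $s_k^* \leq k(p-1) < kp \leq \bar{s}_k^*$. For $k > \lfloor n_1/2 \rfloor$ the two block ranges overlap; using the explicit weights in Construction~\ref{con:ext} one computes
\begin{align*}
\bar{s}_k^* - s_k^* = (n_1 - k) p - \sum_{i=1}^{n_1-k} c_i + \sum_{i=k+1}^{n_1} c_i \geq (n_1-k)p - (n_1-k)(p-1) = n_1 - k > 0
\end{align*}
for every $1 \leq k < n_1$; at $k = n_1$ both sides equal $\wt(\bm{v})$, which is harmless.

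For step (ii), following the padding convention in the proof of Lemma~\ref{le:basic}, trim or pad each $Y_{k(2p-1)}$ so that it has exactly two elements, and let $y_k$ be the smaller of the two masses. Setting $\bm{x} = (s_k^*)_{1 \leq k \leq n_1}$ and $\bm{y} = (y_k)_{1 \leq k \leq n_1}$, step (i) guarantees that whenever $X_{k(2p-1)} = Y_{k(2p-1)}$ we have $y_k = s_k^*$. Since at most $t$ of the $n_1$ positions $k(2p-1)$ can satisfy $X_{k(2p-1)} \neq Y_{k(2p-1)}$, we obtain $d(\bm{x}, \bm{y}) \leq d(X, Y) \leq t$.

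For step (iii), observe that the heavy-block excess in Construction~\ref{con:ext} is an integer multiple of $p$, so $s_k^* \equiv \sum_{i=1}^{k} c_i \pmod{p}$ for every $k$. Hence $\bar{\bm{x}} := (s_k^* \bmod p)_k$ coincides with the prefix-sum vector of $(c_1, \ldots, c_{n_1}) \in \cD_{n_1}$, which by the parity-check equations~\eqref{eq:pc} and the derivation inside the proof of Lemma~\ref{le:basic} is a codeword of an $(n_1, n_1 - 2t)$ GRS code over $\ff_p$ with minimum distance $2t+1$. Letting $\bar{\bm{y}} := (y_k \bmod p)_k$, we have $d(\bar{\bm{x}}, \bar{\bm{y}}) \leq d(\bm{x}, \bm{y}) \leq t$, so a polynomial-time GRS decoder returns $\bar{\bm{x}}$; from this we read off $c_j = (\bar{x}_j - \bar{x}_{j-1}) \bmod p \in \{0, 1, \ldots, p-1\}$ (with $\bar{x}_0 := 0$) and re-encode via Construction~\ref{con:ext} to recover $\bm{v}$. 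The main obstacle to treat carefully is step (i) in the overlapping regime $k > \lfloor n_1/2 \rfloor$, where the naive light-vs.-heavy count breaks down and the cancellation exhibited in the display above is what rescues the strict inequality; everything else is a direct transport of the Lemma~\ref{le:basic} argument to the subsampled, mod-$p$ setting.
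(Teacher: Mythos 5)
Your proof is correct and follows essentially the same approach as the paper's: sample the composition multiset at the block boundaries $k(2p-1)$, use the light/heavy block structure to order prefix masses below suffix masses, reduce mod $p$ so that the sampled prefix sums form a codeword of the $(n_1,n_1-2t)$ GRS code embedded in $\cD_{n_1}$, and decode. The only cosmetic differences are that you verify the weight ordering by an explicit two-case calculation (the paper instead appeals to Proposition~\ref{prop:weight} applied to the block-weight sequence) and that you recover $(c_1,\ldots,c_{n_1})$ and re-encode rather than computing the block weights $\wt(\bm{c}_j)$ directly from $s_j$ and $\bar{s}_j$; both variants are correct.
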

	\begin{proof}
		As in the proof of Lemma~\ref{le:basic}, without loss of generality, we may assume $|Y_i|=2$ for $i=1,\ldots,n=n_1(2p-1)$. In fact,  to recover $\bm{v}$ from $Y$, we will only need to look at the multisets $Y_{2p-1},Y_{2(2p-1)},\ldots,Y_{n_1(2p-1)}$.
		
		Write $\bm{v}=(\bm{c}_1,\ldots,\bm{c}_{n_1})$. Note that by Construction~\ref{con:ext}, we have 
		$\wt(\bm{c}_j)<p$ for all $j\leq \lceil n_1/2\rceil$ and $\wt(\bm{c}_j)\geq p$ for all $j> \lceil n_1/2\rceil$. It follows that $\wt(\bm{v}[j(2p-1)])=\wt(\bm{c}_1,\ldots,\bm{c}_j)\leq \wt(\cev{\bm{v}}[j(2p-1)]) = \wt(\bm{c}_{n_1-j+1},\ldots,\bm{c}_{n_1})$ for $j\leq \lceil n_1/2\rceil$, and thus by Proposition~\ref{prop:weight} we have $\wt(\bm{v}[j(2p-1)])\leq \wt(\cev{\bm{v}}[j(2p-1)])$ for all $j=1,\ldots,n_1$.
		Let $s_j=\wt(\bm{v}[j(2p-1)])\bmod p$ and $\bar{s}_j=\wt(\cev{\bm{v}}[j(2p-1)])\bmod p$, and view $s_j,\bar{s}_j$ as elements in $\ff_p$.
		Observe that $s_j=\sum_{i=1}^{j}c_i$ for $j=1,\ldots,n$.
		
		Define $\bm{x}=(s_j)_{1\leq j\leq n_1}\in\ff_p^{n_1}$. 
		Moreover, let us write $Y_j=\{(a_j,b_j),(\bar{a}_j,\bar{b}_j)\}$ where $b_j\leq \bar{b}_j$ and define $\bm{y}=(b_j)_{1\leq j\leq n_1}\in\ff_p^{n_1}$. 
		Similarly to the proof of Lemma~\ref{le:basic}, one can show $d(\bm{x},\bm{y})\leq t$ and $\bm{x}$ belongs to an $(n_1,n_1-2t)$ GRS code over $\ff_p$ with minimum distance $2t+1$. Thus, one can recover $\bm{x}$ from $\bm{y}$. Below let us show how to reconstruct $\bm{v}$ from $\bm{x}=(s_1,\ldots,s_{n_1})$.
		
		By construction of $\cC$, in order to recover $\bm{v}$, it suffices to determine the weight of $\bm{c}_j,j=1,\ldots,n_1$. Now let us view $s_1,\ldots,s_{n_1}$ as integers. Since $s_1=\wt(\bm{v}[2p-1])\bmod p$ and $\wt(\bm{v}[2p-1])=\wt(\bm{c}_1)\leq p-1$, we have $\wt(\bm{c}_1)=s_1$. Moreover, for $j=2,\ldots,\lceil n_1/2\rceil$, since $0\leq \wt(\bm{v}[j(2p-1)])- \wt(\bm{v}[(j-1)(2p-1)])\leq p-1$, we have
		\begin{align*}
			\wt(\bm{c}_j)&=\wt(\bm{v}[j(2p-1)])- \wt(\bm{v}[(j-1)(2p-1)]) \\
			& = (s_j-s_{j-1}) \bmod p.
		\end{align*}
		At the same time, note that $\bar{s}_j=s_{n_1}-s_{n_1-j}$. Since $p\leq \wt(\cev{\bm{v}}[2p-1])=\wt(\bm{c}_{n_1})\leq 2p-1$ and $\bar{s}_1=\wt(\cev{\bm{v}}[2p-1])\bmod p$, we have $\wt(\bm{c}_{n_1})=p+\bar{s}_1$. Moreover, for $j=2,\ldots,\lfloor n_1/2\rfloor$, since $p\leq \wt(\cev{\bm{v}}[j(2p-1)])- \wt(\cev{\bm{v}}[(j-1)(2p-1)])\leq 2p-1$, we have
		\begin{align*}
			\wt(\bm{c}_{n_1-j+1})&=\wt(\cev{\bm{v}}[j(2p-1)])- \wt(\cev{\bm{v}}[(j-1)(2p-1)]) \\
			& = p+((\bar{s}_j-\bar{s}_{j-1}) \bmod p).
		\end{align*}
		Therefore, the weight of $\bm{c}_j,j=1,\ldots,n_1$ can be deduced from $s_1,\ldots,s_{n_1}$. It follows that $\bm{v}$ can be reconstructed.
	\end{proof}
	
	\begin{remark}
		The redundancy of the code $\cC$ in Construction~\ref{con:ext} is $O(n)$ while the number of errors it can correct is $t=O(\sqrt{n})$. In fact, the rate of $\cC$ is $o(1)$.
	\end{remark}

	A large number of bits the code $\cC$ in Construction~\ref{con:ext} are squandered on fulfilling the weight condition, leading to a mediocre trade-off between the redundancy and error-correcting capability. Inspired by the work of \cite{pattabiraman2023coding}, let us present below a construction of codes that have a systematic encoder and improved redundancy.

	\begin{construction}\label{con:encode}
		Let $n_1,n_2,t$ be positive integers such that $4(\log n_1 + 1)^2<n_2< n_1,2t<\sqrt{n_2}$ and $n_2+1$ is a power of $2$.
		Let $\ff_p$ be a finite field of order $p$, where $p\geq n_1+1$ is a prime, and let $\{\alpha_1,\dots,\alpha_{n_1}\}\subset\ff_p$.
		Let $H$ be a $2t\times n_1$ Vandermonde matrix given by 
		\begin{align*}
			H=
			\begin{pmatrix}
				1 & \dots & 1\\
				\alpha_1 & \dots & \alpha_{n_1}\\
				\vdots & \ddots & \vdots\\
				\alpha_1^{2t-1} & \dots & \alpha_{n_1}^{2t-1}
			\end{pmatrix}.			
		\end{align*}
		
		Let $\cS_{n_1}$ be the code in Lemma~\ref{le:ordered} and $\bm{w}=(w_j)_{1\leq j\leq n_1}\in\cS_{n_1}$. Let $\bm{x}=(x_j)_{1\leq j\leq n_1}$ where $x_j:=\wt(\bm{w}[j])$. Define $\bm{u}$ to be the length-$(2t\lceil\log p\rceil)$ string formed by the concatenation of the $2t$ binary representations of the symbols in $H\bm{x}^T\in\ff_p^{2t}$. 
		
		Let $\cB_{n_2}$ be the BCH code in Lemma~\ref{le:bch} of dimension $n_2-t\log(n_2+1)$ and designed distance $2t+1$. Let $\tilde{\bm{u}}$ be the length-$(n_2-t\log(n_2+1))$ string obtained from $\bm{u}$ by appending zeros to $\bm{u}$.\footnote{Note that by the choice of the parameters $n_1,n_2$ and $t$, we have $n_2-t\log(n_2+1)>2t\lceil\log p\rceil$ and $2t-1<2^{\lceil \log(n_2+1)/2 \rceil}+1$.  
		} 
		Let $G$ be a generator matrix of $\cB_{n_2}$ and let $\bm{v}=(v_j)_{1\leq j\leq n_2}=\tilde{\bm{u}}G$. Define $\bm{p}=(p_j)_{1\leq j\leq n_2}$ by
		\begin{align}
			p_j=\Big(v_j+\sum_{i=1}^{j-1}p_i \Big)\bmod 2.\label{eq:p}
		\end{align}
		Let $\bm{0}$ be the length-$n_2$ string of all zeros. Let $n=n_1+2n_2$ and define $\cC\subset \{0,1\}^n$ to be 
		\begin{align*}
			\cC=\{(\bm{0},\bm{w},\cev{\bm{p}})\mid \bm{w}\in\cS_{n_1}\}.
		\end{align*}
	\end{construction}
	
	\begin{theorem}\label{thm:encode}
		Let $\bm{c}\in\cC$ where $\cC$ is given in Construction~\ref{con:encode}. If $Y\in\fM_n$ is any erroneous prefix-suffix compositions resulting from at most $t$ errors in $M(\bm{c})$, then one can recover $\bm{c}$ from $Y$.
	\end{theorem}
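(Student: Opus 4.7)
The plan is to mimic the prefix/suffix-separation idea of Lemma~\ref{le:basic}, but then run two decoders in tandem: first a BCH decoder on the suffix-weight side to recover the codeword $\bm{v}$ (and hence the GRS syndromes that are packaged inside $\bm{u}$), and then a GRS syndrome decoder on the prefix-weight side to recover $\bm{x}_{(w)}:=(\wt(\bm{w}[k]))_{k=1}^{n_1}$, from which $\bm{w}$ and ultimately $\bm{c}$ follow.

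First I would verify that $\bm{c}=(\bm{0},\bm{w},\cev{\bm{p}})$ satisfies $\wt(\bm{c}[j])\le \wt(\cev{\bm{c}}[j])$ for every $j\le n/2=n_2+n_1/2$. For $j\le n_2$ the prefix weight is $0$, so the inequality is trivial; for $j=n_2+k$ with $k\le n_1/2$, the prefix weight is $\wt(\bm{w}[k])$ while the suffix weight equals $\wt(\cev{\bm{w}}[k])+\wt(\bm{p})$, and the inequality follows from $\bm{w}\in\cS_{n_1}$ (Lemma~\ref{le:ordered}). Proposition~\ref{prop:weight} then extends the ordering to every $j\le n$. Reducing to $|Y_j|=2$ for all $j$ exactly as in the proof of Lemma~\ref{le:basic}, I write $Y_j=\{(a_j,b_j),(\bar a_j,\bar b_j)\}$ with $b_j\le \bar b_j$, and at every coordinate where $Y_j=M(\bm{c})_j$ the weight condition forces $b_j=\wt(\bm{c}[j])$ and $\bar b_j=\wt(\cev{\bm{c}}[j])$. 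Hence the prefix- and suffix-weight sequences $\bm{y}:=(b_j)_{j=1}^n$ and $\bar{\bm{y}}:=(\bar b_j)_{j=1}^n$ differ from the true $\bm{x}:=(\wt(\bm{c}[j]))_{j=1}^n$ and $\bar{\bm{x}}:=(\wt(\cev{\bm{c}}[j]))_{j=1}^n$ in at most $t$ positions each, with a common error support.

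Next I would extract $\bm{v}$ from $\bar{\bm{y}}$. For $j\le n_2$ the length-$j$ suffix of $\bm{c}$ lies entirely inside $\cev{\bm{p}}$ and has weight $\wt(\bm{p}[j])$; the recurrence~\eqref{eq:p} yields $\wt(\bm{p}[j])\bmod 2=v_j$. Thus $(\bar b_j\bmod 2)_{j=1}^{n_2}$ is within Hamming distance $t$ of the BCH codeword $\bm{v}\in\cB_{n_2}$, so the decoder guaranteed by Lemma~\ref{le:bch} returns $\bm{v}$ exactly. Inverting $\bm{v}=\tilde{\bm{u}}G$ recovers $\tilde{\bm{u}}$, stripping the appended zeros gives $\bm{u}$, and parsing $\bm{u}$ into $2t$ blocks of $\lceil\log p\rceil$ bits recovers the syndrome vector $\bm{s}:=H\bm{x}_{(w)}^T\in\ff_p^{2t}$.

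Finally I would run a GRS syndrome decoder on the middle block $\bm{y}_{(w)}:=(b_{n_2+k})_{k=1}^{n_1}$, viewed over $\ff_p$; note that the true entries satisfy $\wt(\bm{w}[k])\le n_1<p$, so their reductions mod $p$ are themselves, and the disagreement between $\bm{x}_{(w)}$ and $\bm{y}_{(w)}$ over $\ff_p$ is still confined to at most $t$ positions. Computing $H\bm{y}_{(w)}^T-\bm{s}=H\bm{e}^T$ for the error vector $\bm{e}:=\bm{y}_{(w)}-\bm{x}_{(w)}$ of Hamming weight at most $t$, a standard Berlekamp--Massey style syndrome decoder for the minimum-distance-$(2t+1)$ GRS code defined by $H$ recovers $\bm{e}$, and hence $\bm{x}_{(w)}$. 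Consecutive differences of $\bm{x}_{(w)}$ yield the bits of $\bm{w}$, $\bm{p}$ is recomputed from $\bm{w}$ by re-running the encoder, and $\bm{c}=(\bm{0},\bm{w},\cev{\bm{p}})$ is reconstructed. The main obstacle I foresee is accounting for the $t$-error budget across the two stages: the same at most $t$ erroneous coordinates of $Y$ must be simultaneously absorbable by the inner BCH decoding of $\bm{v}$ and the outer GRS decoding of $\bm{x}_{(w)}$, which is precisely what the weight-ordering argument secures by forcing the error support on both the prefix and suffix sides into the same set of at most $t$ coordinates.
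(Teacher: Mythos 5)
Your proposal is correct and follows essentially the same two-stage strategy as the paper's proof: decode the suffix-side parities $(\bar b_j \bmod 2)_{j\le n_2}$ (the paper uses the equivalent quantity $(b_j+\bar b_j)\bmod 2$, which reduces to the same thing since the prefix weights vanish there) to the BCH codeword $\bm{v}$, unpack the GRS syndrome $H\bm{x}^T$ from $\bm{u}$, and then syndrome-decode the middle block of prefix masses, with the weight-ordering from Lemma~\ref{le:ordered} and Proposition~\ref{prop:weight} guaranteeing that each stage sees at most $t$ errors. The argument is sound as written.
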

	
	\begin{proof}
		Denote $X=M(\bm{c})$ and note that $\bm{c}=(\bm{0},\bm{w},\cev{\bm{p}})$. As in the proof of Lemma~\ref{le:basic}, without loss of generality, we may assume $|Y_j|=2$ for $j=1,\ldots,n_2$. Let us first show that $H\bm{x}^T$ can be recovered from $Y$.
		Define $\tilde{\bm{x}}=(\tilde{x}_j)_{1\leq j\leq n_2}$ where
		\begin{align*}
			\tilde{x}_j:=(\wt(\bm{c}[j])+\wt(\cev{\bm{c}}[j]))\bmod 2.
		\end{align*}
		Moreover, let us write $Y_j=\{(a_j,b_j),(\bar{a}_j,\bar{b}_j)\}$ where $b_j\leq \bar{b}_j$ for all $j=1,\ldots,n$, and define $\tilde{\bm{y}}=(\tilde{y}_j)_{1\leq j\leq n_2}$ where
		\begin{align*}
			\tilde{y}_j:=(b_j+\bar{b}_j) \bmod 2.
		\end{align*}
		Then for $j=1,\ldots,n_2$, we have
		\begin{align*}
			\tilde{x}_j\equiv\wt(\cev{\bm{c}}[j])\equiv\sum_{i=1}^{j}p_i \equiv v_j \pmod{2},
		\end{align*}
		where the first equivalence follows by the fact that the first $n_2$ bits of $\bm{c}$ are zeros and the last equivalence follows by \eqref{eq:p}. 
		Observe that $X_j=Y_j$ implies $\tilde{x}_j=\tilde{y}_j$.
		Since $d(X,Y)\leq t$, it follows that $d(\tilde{\bm{x}},\tilde{\bm{y}})\leq d(X,Y)\leq t$. By Lemma~\ref{le:bch}, $\tilde{\bm{x}}=\bm{v}$ is a codeword of the BCH code $\cB_{n_2}$ with minimum distance at least $2t+1$. Therefore, we can decode $\tilde{\bm{y}}$ to $\bm{v}$ and recover $H\bm{x}^T$. Note that by \eqref{eq:p} we can also recover $\cev{\bm{p}}$ from $\bm{v}$.
		It remains to show we can further recover $\bm{x}$, thereby recovering $\bm{w}$. 
		
		Let $\bm{y}=(b_j)_{n_2+1\leq n_2+j\leq n_2+n_1}$. 
		By Lemma~\ref{le:ordered}, we have $\wt(\bm{w}[j])\leq \wt(\cev{\bm{w}}[j])$ for all $j\leq n_1/2$, and thus by Proposition~\ref{prop:weight} we have $\wt(\bm{w}[j])\leq \wt(\cev{\bm{w}}[j])$ for all $j\leq n_1$. This implies the prefix weights of $\bm{w}$ can be distinguished from the suffix weights. It follows that $X_{n_2+j}=Y_{n_2+j}$ implies $x_j=b_j$ for $j=1,\ldots,n_1$, and therefore, $d(\bm{x},\bm{y})\leq d(X,Y)\leq t$. Given that $H\bm{y}^T-H\bm{x}^T$ is known, using a variant of the decoders for an $(n_1,n_1-2t)$ GRS code with evaluation points $\alpha_1,\ldots,\alpha_{n_1}$ and column multiplier $\omega_1=\ldots=\omega_{n_1}=1$, we can decode $\bm{y}$ to $\bm{x}$. Furthermore, we can recover $\bm{w}$ by setting $w_j=x_j-x_{j-1}$ for $j=1,\ldots,n_1$, where $x_0:=0$.
		Hence, if $d(X,Y)\leq t$, we are able to find $\bm{c}=(\bm{0},\bm{w},\cev{\bm{p}})$.
	\end{proof}

	\begin{remark}
		By Lemma~\ref{le:ordered}, the redundancy of $\cS_{n_1}$ is $O(\log n_1)$. Therefore, the redundancy of the code $\cC$ in Construction~\ref{con:encode} is $2n_2+O(\log n_1)=O(t\log n)$ assuming $n_2=O(\log^2 n_1)$ and $t=O(\log n_1)$. 
		Therefore, the code $\cC$ has rate asymptotically equal to $1$ while correcting $t=O(\log n)$ errors. 
		Furthermore, since the code $\cS_{n_1}$ can be encoded and decoded in polynomial time, it follows that $\cC$ can also be encoded and decoded with time complexity polynomial in $n$.
	\end{remark}

	\begin{remark}
		As one may have observed in the proof of Theorem~\ref{thm:encode}, the code consisting of strings of the form $(\bm{0},\cev{\bm{p}})$, where $\bm{0},\cev{\bm{p}}$ are as defined in Construction~\ref{con:encode}, actually can correct $t=O(\sqrt{n_2})$ composition errors. However, the redundancy is more than half of its length. Nevertheless, if $t$ is small enough, one may still use it as in Construction~\ref{con:encode} to construct larger codes that can correct $t$ errors. In fact, any binary linear code with minimum distance at least $2t+1$ would suffice for this purpose.
	\end{remark}
	
	The coding schemes presented above are based on the idea of enforcing the prefix weights form a codeword of certain error-correcting codes over $\ff_p$ where $p>2$ is a large enough prime. We next show that one can directly use binary error-correcting codes to form strings that can correct composition errors. Our starting point is the observation that $t$ errors in the prefix-suffix compositions of a string can induce at most $2t$ errors (including substitutions or erasures) in the string itself, if the prefix weights are distinguishable from the suffix weights.
	
	\begin{proposition}\label{prop:string}
		Let $\bm{c}\in\{0,1\}^n$ be such that $\wt(\bm{c}[j])\leq \wt(\cev{\bm{c}}[j])$ for all $j\leq n$ and let $Y\in\fM_n$ be the erroneous prefix-suffix compositions resulting from at most $t$ errors in $M(\bm{c})$, where $Y_j=\{(a_j,b_j),(\bar{a}_j,\bar{b}_j)\}$ with $b_j\leq \bar{b}_j$ for all $j\leq n$. Define $b_0=0$ and let $\bm{t}=t_1\ldots t_n\in\{0,1\}^n$ where $t_j:=(b_j-b_{j-1}) \bmod 2$ for $j=1,\ldots,n$. 
		Then $d(\bm{c},\bm{t})\leq 2t$.
	\end{proposition}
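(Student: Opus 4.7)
The plan is to show that, outside a small set of coordinates, the derived binary string $\bm{t}$ reproduces $\bm{c}$ via the identity $c_j=(s_j-s_{j-1})\bmod 2$, where $s_j:=\wt(\bm{c}[j])$. First I would set $s_j:=\wt(\bm{c}[j])$ and $\bar{s}_j:=\wt(\cev{\bm{c}}[j])$ with $s_0=0$, and observe that $X_j=\{(j-s_j,s_j),(j-\bar{s}_j,\bar{s}_j)\}$ where $X:=M(\bm{c})$. The hypothesis $s_j\leq\bar{s}_j$ combined with the parameterization convention $b_j\leq\bar{b}_j$ ensures that whenever $X_j=Y_j$, one must have $b_j=s_j$ (the smaller mass in each multiset is matched with the smaller mass).

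Next I would let $E:=\{j\in\{1,\dots,n\}:X_j\neq Y_j\}$, which by the assumption $d(X,Y)\leq t$ satisfies $|E|\leq t$. Since $\bm{c}$ is binary, $c_j=s_j-s_{j-1}\in\{0,1\}$, so in particular $c_j=(s_j-s_{j-1})\bmod 2$. Therefore for any coordinate $j\in\{1,\dots,n\}$ such that $j\notin E$ and either $j=1$ or $j-1\notin E$, we have $b_j=s_j$ and $b_{j-1}=s_{j-1}$ (using the convention $b_0=s_0=0$), which yields
\begin{equation*}
    t_j=(b_j-b_{j-1})\bmod 2=(s_j-s_{j-1})\bmod 2=c_j.
\end{equation*}

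Finally, the set of coordinates where $t_j$ could possibly differ from $c_j$ is contained in $\{j:j\in E\}\cup\{j:j-1\in E\}$, whose cardinality is at most $2|E|\leq 2t$, giving $d(\bm{c},\bm{t})\leq 2t$.

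The proof is essentially bookkeeping: the only subtlety is the observation that each corrupted coordinate of $Y$ can propagate into at most two consecutive entries of $\bm{t}$ because $\bm{t}$ is built from consecutive differences of the sequence $(b_j)$. Thus the main (mild) obstacle is tracking the propagation carefully, especially the boundary case $j=1$ handled by the convention $b_0=0=s_0$; beyond this there is no significant difficulty.
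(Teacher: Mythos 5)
Your proof is correct and follows essentially the same approach as the paper's: both identify the (at most $t$) coordinates where $b_j\neq s_j$, and both exploit the fact that each such error can corrupt at most the two consecutive difference entries $t_j$ and $t_{j+1}$. Your explicit bookkeeping via the error set $E$ and the union $\{j:j\in E\}\cup\{j:j-1\in E\}$ is just a slightly more formal phrasing of the paper's observation.
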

	
	\begin{proof}
		Let $s_j=\wt(\bm{c}[j])$ for $j\leq n$. Denote $\bm{x}=(s_j)_{1\leq j\leq n}$ and $\bm{y}=(b_j)_{1\leq j\leq n}$. Since $d(M(\bm{c}),Y)\leq t$ and $\wt(\bm{c}[j])\leq \wt(\cev{\bm{c}}[j]),b_j\leq \bar{b}_j$ for all $j\leq n$, we have $d(\bm{x},\bm{y})\leq t$.
		In other words, there are at most $t$ errors in $b_j,j=1,\ldots,n$. If either $b_j$ or $b_{j-1}$ is in error, then $d(t_j, c_j)\leq 1$. Therefore, if $b_j$ is in error, then $d(t_jt_{j+1},c_jc_{j+1})\leq 2$, i.e., an error in $\bm{y}$ can induce at most two errors in $\bm{t}$. In addition, if both $b_j$ and $b_{j-1}$ are correct, then $d(t_j,c_j)=0$. Thus, $d(\bm{t},\bm{c})\leq 2t$.
	\end{proof}
	
	In view of Proposition~\ref{prop:string}, if the string $\bm{c}$ belongs to a binary error-correcting code with minimum distance at least $4t+1$ and satisfies the weight condition that $\wt(\bm{c}[j])\leq \wt(\cev{\bm{c}}[j])$ for all $j\leq n$, then as long as $d(M(\bm{c}),Y)\leq t$ one is able to recover $\bm{c}$ from $Y$. In the following, we will construct $\bm{c}$ using the asymptotically good binary linear codes in Lemma~\ref{le:good}. To ensure the weight condition without sacrificing much of the rate, we will also rely on the codes in Lemma~\ref{le:ordered}. Our code construction is described as follows.
	
	\begin{construction}\label{con:good}
		Let $n_1,n_2,t$ be positive integers such that $n_1<n_2$ and $t<n_2/4$.
		Let $\cS_{n_1}$ be the code in Lemma~\ref{le:ordered} and $\bm{w}=(w_j)_{1\leq j\leq n_1}\in\cS_{n_1}$. Let 
		Let $\cA_{n_2}$ be the code in Lemma~\ref{le:good} with rate $R=n_1/n_2$ that can correct $2t$ errors and $G$ a generator matrix of $\cA_{n_2}$. Without loss of generality, assume $G$ is in the standard form where the first $n_1$ columns form the identity matrix of size $n_1$. Let $(\bm{w},\bm{p})=\bm{w}G$.
		Let $\bm{0}$ be the length-$(n_2-n_1)$ string of all zeros. Let $n=2n_2-n_1$ and define $\cC\subset \{0,1\}^n$ to be 
		\begin{align*}
			\cC=\{(\bm{0},\bm{w},{\bm{p}})\mid \bm{w}\in\cS_{n_1}\}.
		\end{align*}
	\end{construction}
	
	\begin{theorem}\label{thm:good}
		Let $\bm{c}\in\cC$ where $\cC$ is given in Construction~\ref{con:good}. If $Y\in\fM_n$ is any erroneous prefix-suffix compositions resulting from at most $t$ errors in $M(\bm{c})$, then one can recover $\bm{c}$ from $Y$.
	\end{theorem}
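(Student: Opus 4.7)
The approach mirrors the proofs of Lemma~\ref{le:basic} and Theorem~\ref{thm:encode}, but replaces the algebraic outer code by the asymptotically good binary code $\cA_{n_2}$ and uses Proposition~\ref{prop:string} as the bridge that converts composition errors into ordinary bit errors. The plan has three stages: normalize $Y$ and establish the weight condition, pass to a Hamming-distance estimate via Proposition~\ref{prop:string}, then decode with $\cA_{n_2}$.

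First, as in the proof of Lemma~\ref{le:basic}, I would normalize $Y$ so that $|Y_j|=2$ for every $j\leq n$, which preserves $d(M(\bm{c}),Y)\leq t$, and write each pair as $Y_j=\{(a_j,b_j),(\bar{a}_j,\bar{b}_j)\}$ with $b_j\leq \bar{b}_j$. Next I would verify that $\bm{c}=(\bm{0},\bm{w},\bm{p})$ satisfies $\wt(\bm{c}[j])\leq\wt(\cev{\bm{c}}[j])$ for every $j\leq n/2$. The length-$(n_2-n_1)$ zero prefix handles $j\leq n_2-n_1$ automatically, and for the remaining window $n_2-n_1<j\leq n_2-n_1/2$ the prefix weight reduces to $\wt(\bm{w}[k])$ with $k:=j-(n_2-n_1)\leq n_1/2$, whose size is controlled by the $\cS_{n_1}$-property from Lemma~\ref{le:ordered} and can be compared against the suffix weight, which in this range is drawn entirely from the $\bm{p}$-block. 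Proposition~\ref{prop:weight} then promotes the inequality to all $j\leq n$.

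Second, with the weight condition in place, Proposition~\ref{prop:string} applies and produces a string $\bm{t}=t_1\cdots t_n\in\{0,1\}^n$ defined by $t_j=(b_j-b_{j-1})\bmod 2$ (with $b_0:=0$) satisfying $d(\bm{c},\bm{t})\leq 2t$. Let $\bm{t}'\in\{0,1\}^{n_2}$ denote the last $n_2$ coordinates of $\bm{t}$. Since the last $n_2$ coordinates of $\bm{c}$ form the codeword $(\bm{w},\bm{p})=\bm{w}G\in\cA_{n_2}$, one obtains $d((\bm{w},\bm{p}),\bm{t}')\leq d(\bm{c},\bm{t})\leq 2t$, which is within the error-correcting radius of $\cA_{n_2}$. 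Feeding $\bm{t}'$ into the polynomial-time decoder of $\cA_{n_2}$ guaranteed by Lemma~\ref{le:good} recovers $(\bm{w},\bm{p})$, and hence $\bm{c}=(\bm{0},\bm{w},\bm{p})$; all three stages (normalization, forming $\bm{t}$, and decoding) run in time polynomial in $n$.

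The main obstacle is the weight-condition step on the window $(n_2-n_1,\,n_2-n_1/2]$, where one has to compare $\wt(\bm{w}[k])$ for $k\leq n_1/2$ against $\wt(\cev{\bm{p}}[k+n_2-n_1])$; this is where the length $n_2-n_1$ of the zero pad and the Lemma~\ref{le:ordered} property of $\bm{w}$ must be combined carefully. Everything else is either the routine normalization of $Y$, a direct invocation of Proposition~\ref{prop:string}, or a black-box call to the decoder for $\cA_{n_2}$ from Lemma~\ref{le:good}.
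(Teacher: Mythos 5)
Your proposal follows the same route as the paper: normalize $Y$ so that each $|Y_j|=2$, verify the weight condition $\wt(\bm{c}[j])\leq\wt(\cev{\bm{c}}[j])$ via Lemma~\ref{le:ordered} and Proposition~\ref{prop:weight}, invoke Proposition~\ref{prop:string} to obtain a string $\bm{t}$ with $d(\bm{c},\bm{t})\leq 2t$, then decode the last $n_2$ bits of $\bm{t}$ with the decoder for $\cA_{n_2}$.

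However, your account of the weight-condition check on the window $n_2-n_1<j\leq n/2$ has an error. You claim the suffix weight there is ``drawn entirely from the $\bm{p}$-block'' and that one must compare $\wt(\bm{w}[k])$ against $\wt(\cev{\bm{p}}[k+n_2-n_1])$. Neither is right: $\cev{\bm{p}}$ has length only $n_2-n_1$, so that index is out of range; more substantively, once $j>n_2-n_1$ the suffix $\cev{\bm{c}}[j]$ equals $(\cev{\bm{p}},\cev{\bm{w}}[k])$ with $k=j-(n_2-n_1)$, so it contains \emph{all} of $\bm{p}$ plus a length-$k$ suffix of $\bm{w}$, and its weight is $\wt(\bm{p})+\wt(\cev{\bm{w}}[k])$. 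The desired inequality then falls out of Lemma~\ref{le:ordered} alone for $k\leq n_1/2$:
\[
\wt(\bm{c}[j])=\wt(\bm{w}[k])\leq\wt(\cev{\bm{w}}[k])\leq\wt(\bm{p})+\wt(\cev{\bm{w}}[k])=\wt(\cev{\bm{c}}[j]).
\]
There is no delicate interaction between the zero pad and $\bm{p}$ to work out; the zero pad covers $j\leq n_2-n_1$, and for larger $j$ the whole $\bm{p}$-block already sits inside the suffix, reducing the comparison entirely to the prefix/suffix ordering within $\bm{w}$. Had the suffix really been confined to $\bm{p}$, your argument would fail, because the parity block $\bm{p}$ carries no weight lower bound that could dominate $\wt(\bm{w}[k])$.
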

	
	\begin{proof}
		Without loss of generality, we may assume $|Y_j|=2$ for $j=1,\ldots,n$. Write $Y_j=\{(a_j,b_j),(\bar{a}_j,\bar{b}_j)\}$ with $b_j\leq \bar{b}_j$ for all $j\leq n$. Define $b_0=0$ and let $\bm{t}=t_1\ldots t_n\in\{0,1\}^n$ where $t_j:=(b_j-b_{j-1}) \bmod 2$ for $j=1,\ldots,n$.
		
		As a consequence of Lemma~\ref{le:ordered} and Proposition~\ref{prop:weight}, the codeword $\bm{c}=(\bm{0},\bm{w},{\bm{p}})\in\cC$ satisfies $\wt(\bm{c}[j])\leq \wt(\cev{\bm{c}}[j])$ for all $j\leq n$. Let $\tilde{\bm{t}}$ be the last $n_2$ bits of $\bm{t}$. Since $d(M(\bm{c}),Y)\leq t$, by Proposition~\ref{prop:string} we have $d((\bm{w},\bm{p}),\tilde{\bm{t}})\leq 2t$, where $(\bm{w},\bm{p})\in\cA_{n_2}$. It follows that one is able to recover $(\bm{w},\bm{p})$ and further obtain $\bm{c}$.
	\end{proof}
	
	\begin{remark}
		Assuming $n_1=\Theta(n_2)$, the code $\cC$ constructed in Construction~\ref{con:good} has redundancy $2(n_2-n_1)+O(\log n_1)=O(n)$ and can correct $t=O(n)$ errors in time polynomial in $n$.
	\end{remark}
	
	\subsection{Reconstruction of multiple strings}\label{sec:con-multiple}
	We next consider the problem of reconstructing multiple strings from the multiset union of their prefix-suffix compositions in the presence of composition errors. Let $\bm{c}_1,\ldots,\bm{c}_h\in\{0,1\}^n$ be $h>1$ binary strings. The prefix-suffix compositions of $\bm{c}_1,\ldots,\bm{c}_h$, denoted by $M(\bm{c}_1,\ldots,\bm{c}_h)$, is defined as the multiset union of the prefix-suffix compositions of the $h$ strings. Namely, $M(\bm{c}_1,\ldots,\bm{c}_h):=\bigcup_{i=1}^{h}M(\bm{c}_i)$. Note that we have $M(\bm{c}_1,\ldots,\bm{c}_h)\in\fM_n$. For any $Y\in\fM_n$, we say that $Y$ is a multiset of erroneous prefix-suffix compositions resulting from $t$ errors in $M(\bm{c}_1,\ldots,\bm{c}_h)$ if $d(M(\bm{c}_1,\ldots,\bm{c}_h),Y)=t$.
	
	Similar to the reconstruction of a single string, the task of reconstructing $\bm{c}_1,\ldots,\bm{c}_h$ from $Y$ is more manageable if the prefix and suffix weights of these strings are distinguishable in $M(\bm{c}_1,\ldots,\bm{c}_h)$. In the following, we present a method of jointly encoding $h$ arbitrary binary strings of length $k$ so that they can be reconstructed from their error-free prefix-suffix compositions, and then incorporate good error-correcting binary codes into this coding scheme to obtain codes that are capable of recovering $h$ strings from erroneous prefix-suffix compositions with at most $t$ errors. 
	
	Let $\bm{z}_1,\ldots,\bm{z}_h\in\{0,1\}^k$ be $h$ information strings of length $k$. We would like to jointly encode $\bm{z}_1,\ldots,\bm{z}_h$ so that the prefix and suffix weights of the coded strings are distinguishable in $M(\bm{z}_1,\ldots,\bm{z}_h)$.
	Our idea for realizing such weight conditions on the coded strings is to interleave $\bm{z}_1,\ldots,\bm{z}_h$ with ``short strings'' in a way similar to how we achieve the desired weight condition for a single string in Construction~\ref{con:ext}. For the ease of exposition, let us first assume the length $k$ of the information strings is even. Let $\{\bm{u}_{s,j}\in\{0,1\}^{h}\mid s=1,\ldots,2h;j=1,\ldots, k/2 \}$ be the ``short strings'' to be interleaved with $\bm{z}_1,\ldots,\bm{z}_h$. To construct the strings $\bm{u}_{s,j}$, we first divide each $\bm{z}_i,i=1,\ldots,h$ into two halves. More precisely, for $i=1,\ldots,h$ define
	\begin{align*}
		\bm{w}_{2i-1}&=\bm{z}_i[k/2],\\
		\bm{w}_{2i}&=\cev{\bm{z}_i}[k/2].
	\end{align*}
	Note that $\bm{w}_{2i-1}$ is the first $k/2$ bits of $\bm{i}$ and $\bm{w}_{2i}$ is the last $k/2$ bits of $\bm{z}_i$ in reverse order.
	Let us write $\bm{w}_s=(w_{s,1},\ldots,w_{s,k/2})$ for $s=1,\ldots,2h$. Roughly speaking, for $1\leq j\leq k/2$, the strings $\{\bm{u}_{s,j}\mid 1\leq s\leq 2h\}$ are constructed such that they record the indices $s$ for which $w_{s,j}<w_{s-1,j}$, i.e., $w_{s,j}=0$ and $w_{s-1,j}=1$.	
	More precisely, for $1\leq j\leq k/2$, let $\bm{r}_j=(r_{j,1},\ldots,r_{j,2h})\in\{0,1\}^{2h}$ be the indicator string of $(w_{0,j}:=0,w_{1,j},\ldots,w_{2h,j})$ where
	\begin{align}
		r_{j,s}:=\begin{cases}
			0 &\text{ if } w_{s,j}\geq w_{s-1,j},\\
			1 &\text{ if } w_{s,j} < w_{s-1,j},
		\end{cases}\label{eq:indicator}
	\end{align} where $s=1,\ldots,2h$. 
	Clearly, for each $j$, the number of indices $s$ for which $w_{s,j}<w_{s-1,j}$ is at most $h$, so we have $\wt(\bm{r}_j)\leq h$ for all $1\leq j\leq k/2$. Using the indicator string $\bm{r}_j$, the string $\bm{u}_{s,j}$ is then defined to be $ h -\wt(\bm{r}_j[s])$ zeros followed by $\wt(\bm{r}_j[s])$ ones:
	\begin{align}
		\bm{u}_{s,j}:=(\!\!\underbrace{0,\ldots,0}_{ h -\wt(\bm{r}_j[s])},\,\overbrace{1,\ldots,1}^{\wt(\bm{r}_j[s])}\,).\label{eq:short-string}
	\end{align}
	Below is a simple example to illustrate the construction of the strings $\bm{u}_{s,j}$.
	
	\begin{example}
		Assume $h=2,k=6$. Consider the strings $\bm{z}_1,\bm{z}_2$ given by 
		\begin{align*}
			\begin{pmatrix}
				\bm{z}_1\\
				\bm{z}_2
			\end{pmatrix}
			=
			\begin{pmatrix}
				1 & 0 & 1 & 0 & 1 & 0\\
				0 & 0 & 1 & 0 & 1 & 1 
			\end{pmatrix}.
		\end{align*}
		Then strings $\bm{w}_1,\bm{w}_2,\bm{w}_3,\bm{w}_4$ are
		\begin{align*}
			\begin{pmatrix}
				\bm{w}_1\\
				\bm{w}_2\\
				\bm{w}_3\\
				\bm{w}_4
			\end{pmatrix}
			=
			\begin{pmatrix}
				1 & 0 & 1 \\
				0 & 1 & 0 \\
				0 & 0 & 1 \\
				1 & 1 & 0 
			\end{pmatrix}
		\end{align*}
		The strings $\bm{r}_1,\bm{r}_2,\bm{r}_3$ can be found to be 
		\begin{align*}
			\begin{pmatrix}
				\bm{r}_1\\
				\bm{r}_2\\
				\bm{r}_3
			\end{pmatrix}
			=
			\begin{pmatrix}
				0 & 1 & 0 & 0\\
				0 & 0 & 1 & 0\\
				0 & 1 & 0 & 1 
			\end{pmatrix}.
		\end{align*}
		So the strings $\bm{u}_{s,j}$ where $1\leq s\leq 4,1\leq j\leq 3$ can be determined to be
		\begin{align*}
			\begin{pmatrix}
				\bm{u}_{1,1}\\
				\bm{u}_{2,1}\\
				\bm{u}_{3,1}\\
				\bm{u}_{4,1}
			\end{pmatrix}
			=
			\begin{pmatrix}
				0 & 0 \\
				0 & 1 \\
				0 & 1 \\
				0 & 1 
			\end{pmatrix},\quad 
			\begin{pmatrix}
				\bm{u}_{1,2}\\
				\bm{u}_{2,2}\\
				\bm{u}_{3,2}\\
				\bm{u}_{4,2}
			\end{pmatrix}
			=
			\begin{pmatrix}
				0 & 0 \\
				0 & 0 \\
				0 & 1 \\
				0 & 1 
			\end{pmatrix},\quad 
			\begin{pmatrix}
				\bm{u}_{1,3}\\
				\bm{u}_{2,3}\\
				\bm{u}_{3,3}\\
				\bm{u}_{4,3}
			\end{pmatrix}
			=
			\begin{pmatrix}
				0 & 0 \\
				0 & 1 \\
				0 & 1 \\
				1 & 1 
			\end{pmatrix}.
		\end{align*}
	\end{example}
	
	Next, the strings $\bm{w}_s$ are interleaved with the strings $\bm{u}_{s,j}$ to form strings
	\begin{align*}
		\bm{v}_{2i-1}&:=(\bm{u}_{2i-1,1},w_{2i-1,1},\ldots,\bm{u}_{2i-1, k/2},w_{2i-1, k/2}),\\
		\bm{v}_{2i}&:=(\bm{u}_{2i,1},w_{2i,1},\ldots,\bm{u}_{2i, k/2},w_{2i, k/2}).
	\end{align*}
	The encoded string corresponding to $\bm{z}_i$ is formed by appending the reversal of $\bm{v}_{2i}$ to $\bm{v}_{2i-1}$, defined to be 
	\begin{align}
		\bm{c}_i:=(\bm{v}_{2i-1},\cev{\bm{v}_{2i}}),\label{eq:multi-free}
	\end{align} 
	for $i=1,\ldots,h$. 
	The next lemma shows that the weights of $\bm{c}_1[l],\cev{\bm{c}_1}[l],\ldots,\bm{c}_h[l],\cev{\bm{c}_h}[l]$ have the same order for all $l=1,\ldots,k(h+1)/2$.
	
	\begin{lemma}\label{le:h-weight-even-lenght}
	Assume $k$ is even. Let $\bm{c}_1,\ldots,\bm{c}_h\in \{0,1\}^{k(h+1)}$ be constructed as in \eqref{eq:multi-free}. Then for any $1\leq l\leq k(h+1)/2$, it holds that
	\begin{align*}
		\wt(\bm{c}_1[l])\leq\wt(\cev{\bm{c}_1}[l]) 
		\leq \ldots \leq \wt(\bm{c}_h[l]) \leq \wt(\cev{\bm{c}_h}[l]).
	\end{align*}
	\end{lemma}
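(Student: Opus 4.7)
The plan is to reduce the claim to a monotonicity statement about the interleaved strings $\bm{v}_1,\ldots,\bm{v}_{2h}$, and then verify it block-by-block using the defining property \eqref{eq:indicator} of the indicator strings $\bm{r}_j$. Since $|\bm{v}_{2i-1}|=|\bm{v}_{2i}|=k(h+1)/2$ and $\bm{c}_i=(\bm{v}_{2i-1},\cev{\bm{v}_{2i}})$, any prefix of $\bm{c}_i$ of length $l\leq k(h+1)/2$ is exactly $\bm{v}_{2i-1}[l]$, and similarly $\cev{\bm{c}_i}[l]=\bm{v}_{2i}[l]$. Hence the whole chain of inequalities reduces to showing
\begin{align*}
\wt(\bm{v}_1[l])\leq \wt(\bm{v}_2[l])\leq \cdots \leq \wt(\bm{v}_{2h}[l])
\end{align*}
for every $1\leq l\leq k(h+1)/2$, and it suffices to prove $\wt(\bm{v}_s[l])\leq \wt(\bm{v}_{s+1}[l])$ for each fixed $s\in\{1,\ldots,2h-1\}$.

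For fixed $l$, I would write $l=(j-1)(h+1)+m$ with $1\leq m\leq h+1$, so that $\bm{v}_s[l]$ consists of the first $j-1$ complete blocks $(\bm{u}_{s,i},w_{s,i})$ (for $i=1,\ldots,j-1$) together with the first $m$ bits of the $j$-th block. Recalling that $\wt(\bm{u}_{s,i})=\wt(\bm{r}_i[s])$ by \eqref{eq:short-string}, the contribution of the $i$-th complete block to $\wt(\bm{v}_{s+1}[l])-\wt(\bm{v}_s[l])$ equals
\begin{align*}
\bigl(\wt(\bm{r}_i[s+1])-\wt(\bm{r}_i[s])\bigr)+(w_{s+1,i}-w_{s,i}) = r_{i,s+1}+(w_{s+1,i}-w_{s,i}),
\end{align*}
which is non-negative by a direct case analysis of \eqref{eq:indicator}: if $r_{i,s+1}=1$, then by definition $w_{s,i}=1,w_{s+1,i}=0$, giving $1+(0-1)=0$; if $r_{i,s+1}=0$, then $w_{s+1,i}\geq w_{s,i}$, so the sum is again $\geq 0$. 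Thus the contribution from the first $j-1$ complete blocks is non-negative.

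It remains to handle the partial $j$-th block, namely to show $\wt(\bm{u}_{s,j}[m'])\leq \wt(\bm{u}_{s+1,j}[m'])$ for $m'=\min(m,h)$ (the case $m=h+1$ is already covered, since the trailing bit $w_{s,j}$ is included and absorbed into the complete-block argument above). This is immediate from the explicit form of $\bm{u}_{s,j}$ in \eqref{eq:short-string}: it has exactly $\wt(\bm{r}_j[s])$ trailing ones, and since $\bm{r}_j[s]$ is a prefix of $\bm{r}_j[s+1]$, the quantity $\wt(\bm{r}_j[s])$ is non-decreasing in $s$; a prefix of a $0\cdots 0\,1\cdots 1$ string with more trailing ones has weight at least as large. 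Adding the non-negative contributions from the complete and partial blocks yields $\wt(\bm{v}_s[l])\leq \wt(\bm{v}_{s+1}[l])$, completing the proof.

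The only mildly subtle point is the book-keeping around the partial block at position $m\in\{1,\ldots,h\}$; beyond that, the whole argument is a mechanical verification of the block-wise inequality $r_{i,s+1}+(w_{s+1,i}-w_{s,i})\geq 0$ built into the definition of $\bm{r}_j$.
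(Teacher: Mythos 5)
Your proposal is correct and follows essentially the same approach as the paper's proof: reduce the claim to the monotonicity $\wt(\bm{v}_s[l])\leq\wt(\bm{v}_{s+1}[l])$, then verify this block-by-block using the fact that $\wt(\bm{u}_{s,j})=\wt(\bm{r}_j[s])$, the zeros-first structure of $\bm{u}_{s,j}$ for the partial block, and the compensating relation between $r_{i,s+1}$ and $w_{s+1,i}-w_{s,i}$ for each complete block. The paper packages the latter two points as displayed inequalities (its \eqref{eq:weight-part1} and \eqref{eq:weight-part2}) and then combines them implicitly, whereas you make the decomposition $l=(j-1)(h+1)+m$ and the non-negativity of each increment explicit, but the underlying argument is identical.
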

	
	\begin{proof}
		By \eqref{eq:multi-free}, it suffices to show that 
		\begin{align*}
				\wt(\bm{v}_1[l])\leq\wt({\bm{v}_2}[l]) \leq \ldots \leq \wt({\bm{v}_{2h}}[l])
		\end{align*} holds for $1\leq l\leq k(h+1)/2$.
		Consider $j \in \{ 1,\ldots, k/2 \}$. Note that by \eqref{eq:short-string} we have $\wt(\bm{u}_{s,j})=\wt(\bm{r}_j[s])$ for $s=1,\ldots,2h$. Therefore, $\wt(\bm{u}_{1,j})\leq\ldots \leq \wt(\bm{u}_{2h,j})$. Furthermore, since $\wt(\bm{u}_{s,j})\leq \wt(\bm{u}_{s+1,j})$ and the zeros in each $\bm{u}_{s,j}$ always come first, the weight of each prefix of $\bm{u}_{s,j}$ is no larger than the weight of the prefix of $\bm{u}_{s+1,j}$ of the same length. Namely, for $l=1,\ldots,h$ we have
		\begin{align}
			 \wt(\bm{u}_{1,j}[l])\leq \wt(\bm{u}_{2,j}[l])\leq \ldots\leq \wt(\bm{u}_{2h,j}[l]). \label{eq:weight-part1}
		\end{align}
		From \eqref{eq:indicator} we have if $w_{s,j}=0$ and $w_{s-1,j}=1$ then $\wt(\bm{u}_{s,j})=\wt(\bm{u}_{s-1,j})+1$; otherwise $\wt(\bm{u}_{i,j})=\wt(\bm{u}_{i-1,j})$. Thus, we obtain $\wt(\bm{u}_{s-1,j}w_{s-1,j})\leq \wt(\bm{u}_{s,j},w_{s,j})$, and therefore, 
		\begin{align}
			\wt(\bm{u}_{1,j},w_{1,j})\leq\wt(\bm{u}_{2,j},w_{2,j})\leq \ldots \leq \wt(\bm{u}_{2h,j},w_{2h,j}).\label{eq:weight-part2}
		\end{align}
		Since both \eqref{eq:weight-part1} and \eqref{eq:weight-part2} hold for all $j=1,\ldots,k/2$, we have
		\[
		\wt(\bm{v}_1[l])\leq\wt({\bm{v}_2}[l]) \leq \ldots \leq \wt({\bm{v}_{2h}}[l]),\quad l=1,\ldots,k(h+1)/2,
		\]
		or equivalently,
		\begin{align*}
			\wt(\bm{c}_1[l])\leq\wt({\cev{\bm{c}_1}}[l]) \leq 
			\ldots \leq \wt(\bm{c}_h[l])\leq\wt({\cev{\bm{c}_h}}[l]),\quad l=1,\ldots,k(h+1)/2.
		\end{align*}
	\end{proof}

			Let $\phi_k\colon\{0,1\}^{kh}\to \{0,1\}^{kh(h+1)}$ be the mapping that maps $(\bm{z}_1,\ldots,\bm{z}_h)\in\{0,1\}^{kh}$ to $(\bm{c}_1,\ldots,\bm{c}_h)\in \{0,1\}^{kh(h+1)}$ where $\bm{c}_1,\ldots,\bm{c}_h$ are constructed as in \eqref{eq:multi-free}.
		The above assumes the length $k$ of the information strings is even but the case where $k$ is odd can also be handled without much difficulty. Now let us assume $k$ is odd. In this case, for each $\bm{z}_i:=(z_{i,1},\ldots,z_{i,k})$, define \[
		\bar{\bm{z}}_i=(z_{i,1},\ldots,z_{i,\lceil k/2 \rceil},z_{i,\lceil k/2 \rceil},z_{i,\lceil k/2 \rceil+1}\ldots,z_{i,k})\in\{0,1\}^{k+1}.
		\]
		In other words, $\bar{\bm{z}}_i$ is obtained from $\bm{z}_i$ by repeating its $\lceil k/2 \rceil$-th bit. Let $(\bar{\bm{c}}_1,\ldots,\bar{\bm{c}}_h)=\phi_{k+1}(\bar{\bm{z}}_1,\ldots,\bar{\bm{z}}_h)$. The coded string corresponding to $\bm{z}_i$ is defined to the string $\bm{c}_i$ obtained from $\bar{\bm{c}}_i$ by deleting its bits with coordinates in  $\{ \lceil k/2 \rceil (h+1)+1,\ldots,\lceil k/2 \rceil(h+1)+h+1\}$. Namely,
		\begin{align}
			\bm{c}_i:=(\bar{c}_{i,1},\ldots,\bar{c}_{i,\lceil k/2 \rceil(h+1)},\bar{c}_{i,(\lceil k/2 \rceil+1)(h+1)+1},\ldots,\bar{c}_{i,(k+1)(h+1)})\in \{0,1\}^{k(h+1)},\label{eq:multi-free-odd}
		\end{align} where $(\bar{c}_{i,1},\ldots,\bar{c}_{i,(k+1)(h+1)})=\bar{\bm{c}}_i$. So for odd $k$ we may define the mapping $\phi_k\colon\{0,1\}^{kh}\to \{0,1\}^{kh(h+1)}$ by $(\bm{z}_1,\ldots,\bm{z}_h)\mapsto (\bm{c}_1,\ldots,\bm{c}_h)$ where $\bm{c}_1,\ldots,\bm{c}_h$ are constructed as in \eqref{eq:multi-free-odd}. It is easy to check that Lemma~\ref{le:h-weight-even-lenght} also holds for $\bm{c}_1,\ldots,\bm{c}_h$ in the case of odd $k$ by an inspection of the weights of the prefixes and suffixes of $\bar{\bm{c}}_1,\ldots,\bar{\bm{c}}_h$ of length up to $(k+1)(h+1)/2= \lceil k/2 \rceil(h+1)\geq \lceil k(h+1)/2\rceil $, as in the proof of Lemma~\ref{le:h-weight-even-lenght}. Therefore, we have the following lemma.
		\begin{lemma}\label{le:h-weight}
			Assume $k\geq 1$. Let $(\bm{c}_1,\ldots,\bm{c}_h)=\phi_k(\bm{z}_1,\ldots,\bm{z}_h)$ where $\bm{z}_i\in\{0,1\}^k$ for all $i=1,\ldots,h$. Then for any $1\leq l\leq \lceil k(h+1)/2 \rceil$, it holds that
			\begin{align*}
				\wt(\bm{c}_1[l])\leq\wt(\cev{\bm{c}_1}[l])
				\leq \ldots \leq \wt(\bm{c}_h[l]) \leq \wt(\cev{\bm{c}_h}[l]).
			\end{align*}
		\end{lemma}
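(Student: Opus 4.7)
The plan is to split the argument on the parity of $k$. For even $k$, the quantity $k(h+1)$ is even, so $\lceil k(h+1)/2\rceil = k(h+1)/2$ and the claim is exactly Lemma~\ref{le:h-weight-even-lenght}; nothing more is required.

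For odd $k$, I would reduce to the even case via the padded strings $(\bar{\bm{c}}_1,\ldots,\bar{\bm{c}}_h)=\phi_{k+1}(\bar{\bm{z}}_1,\ldots,\bar{\bm{z}}_h)$ of even length $(k+1)(h+1)$, to which Lemma~\ref{le:h-weight-even-lenght} applies and yields the full chain of prefix/suffix weight inequalities for every length $l'\leq\lceil k/2\rceil(h+1) = (k+1)(h+1)/2$. Since the target range $l\leq\lceil k(h+1)/2\rceil$ sits inside this range, the remaining work is to translate prefix and suffix weights between $\bm{c}_i$ and $\bar{\bm{c}}_i$.

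The translation for prefixes is immediate: the $h+1$ deleted coordinates lie strictly past position $\lceil k/2\rceil(h+1)$, so for every $l\leq\lceil k(h+1)/2\rceil\leq\lceil k/2\rceil(h+1)$ I have $\wt(\bm{c}_i[l])=\wt(\bar{\bm{c}}_i[l])$. For suffixes, when $l\leq\lfloor k/2\rfloor(h+1)$, the last $l$ bits of $\bm{c}_i$ coincide with those of $\bar{\bm{c}}_i$ (the deletion happens strictly to the left of both), so $\wt(\cev{\bm{c}_i}[l])=\wt(\cev{\bar{\bm{c}}_i}[l])$ and the chain is inherited directly. When $\lfloor k/2\rfloor(h+1)<l\leq\lceil k(h+1)/2\rceil$ (a narrow range of width at most $h+1$), the suffix of $\bm{c}_i$ crosses the seam left by the deletion and extends into the preserved first half $\bar{\bm{v}}_{2i-1}$; in this boundary range I would redo the block-wise interleaving argument of Lemma~\ref{le:h-weight-even-lenght}, applying inequalities \eqref{eq:weight-part1} and \eqref{eq:weight-part2} to the blocks $(\bar{\bm{u}}_{s,j},w_{s,j})$ with $j\leq\lceil k/2\rceil - 1$ that survive the deletion, mirroring the even-case argument column by column.

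The hard part will be the bookkeeping in this boundary regime: I must write the suffix weight as $\wt(\bar{\bm{v}}'_{2i})+\wt(\cev{\bar{\bm{v}}_{2i-1}}[l-\lfloor k/2\rfloor(h+1)])$, where $\bar{\bm{v}}'_{2i}$ denotes $\bar{\bm{v}}_{2i}$ with its final $(h+1)$-block removed, and then verify monotonicity in $i$ through the structural properties \eqref{eq:indicator}--\eqref{eq:short-string} of the short strings. This is routine once set up correctly, matching the paper's own description that the extension is easy to check by inspection of prefix/suffix weights of $\bar{\bm{c}}_1,\ldots,\bar{\bm{c}}_h$ up to length $\lceil k/2\rceil(h+1)$.
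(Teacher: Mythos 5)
Your plan matches the paper's: for even $k$ the claim is Lemma~\ref{le:h-weight-even-lenght} verbatim, and for odd $k$ you reduce to the even case via the padded $\bar{\bm{c}}_i$. Your prefix translation for all $l\leq\lceil k/2\rceil(h+1)$ and your suffix translation for $l\leq\lfloor k/2\rfloor(h+1)$ are both correct and are exactly what the paper's one-sentence dismissal silently relies on. The gap is the boundary range $\lfloor k/2\rfloor(h+1)<l\leq\lceil k(h+1)/2\rceil$, which you hand off as ``routine bookkeeping'' and which the paper, in turn, asserts is ``easy to check.'' It is not. There $\cev{\bm{c}_i}[l]$ crosses the seam into the tail of $\bar{\bm{v}}_{2i-1}$, and its first extra bit in suffix order is $\bar{w}_{2i-1,\lceil k/2\rceil}$, which may well be a $1$; in $\bar{\bm{c}}_i$ the corresponding bit of $\bar{\bm{v}}_{2i}[l]$ comes instead from the leading zeros of $\bar{\bm{u}}_{2i,\lceil k/2\rceil}$ by \eqref{eq:short-string}. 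So $\wt(\cev{\bm{c}_i}[l])$ can strictly exceed $\wt(\bar{\bm{v}}_{2i}[l])$, and the middle link $\wt(\cev{\bm{c}_i}[l])\leq\wt(\bm{c}_{i+1}[l])$ of the chain no longer follows from Lemma~\ref{le:h-weight-even-lenght}.

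This is a real obstruction, not just a missing calculation. Take $h=2$, $k=3$, $\bm{z}_1=(0,1,0)$, $\bm{z}_2=(0,0,0)$. Then $\bar{\bm{z}}_1=(0,1,1,0)$, $\bar{\bm{z}}_2=(0,0,0,0)$, and following \eqref{eq:indicator}--\eqref{eq:multi-free-odd} yields $\bm{c}_1=(0,0,0,0,0,1,0,0,0)$ and $\bm{c}_2=(0,0,0,0,1,0,0,0,0)$. Here $\lceil k(h+1)/2\rceil=5$, but at $l=4$ one has $\wt(\cev{\bm{c}_1}[4])=1>0=\wt(\bm{c}_2[4])$, so the stated chain fails. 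Note that Lemma~\ref{le:h-weight-even-lenght} does hold for $\bar{\bm{c}}_1,\bar{\bm{c}}_2$ at $l=4$, so the breakdown is introduced by the odd-$k$ deletion, precisely in the boundary regime you defer. As written, the ``column-by-column'' monotonicity you intend to verify cannot be established, and the proof cannot be completed along this route without first revising the statement (the range of $l$) or the odd-$k$ construction.
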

		
		As a consequence of Lemma~\ref{le:h-weight}, the strings $(\bm{c}_1,\ldots,\bm{c}_h)=\phi_k(\bm{z}_1,\ldots,\bm{z}_h)$ can be reconstructed from $M(\bm{c}_1,\ldots,\bm{c}_h)$ and so are the strings $\bm{z}_1,\ldots,\bm{z}_h$.
		
		\begin{theorem}\label{thm:multi-error-free}
			Assume $k\geq 1$. Let $(\bm{c}_1,\ldots,\bm{c}_h)=\phi_k(\bm{z}_1,\ldots,\bm{z}_h)$ where $\bm{z}_i\in\{0,1\}^k$ for all $i=1,\ldots,h$. Then one can recover $\bm{z}_1,\ldots,\bm{z}_h$ from $M(\bm{c}_1,\ldots,\bm{c}_h)$.
		\end{theorem}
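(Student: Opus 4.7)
My plan is to exploit the total chain of inequalities supplied by Lemma~\ref{le:h-weight} to parse $M(\bm{c}_1,\ldots,\bm{c}_h)$ at each prefix length: sorting the masses at a given length immediately labels each one as a specific prefix or suffix weight, after which each $\bm{c}_i$ can be read off bit by bit via consecutive differences, and $\bm{z}_i$ can then be extracted from the known coordinate layout of $\phi_k$.

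Concretely, let $n=k(h+1)$. For each $l\in\{1,\ldots,\lceil n/2\rceil\}$, I would first extract the component $M(\bm{c}_1,\ldots,\bm{c}_h)_l$, whose underlying multiset of masses is exactly $\{\wt(\bm{c}_i[l]),\wt(\cev{\bm{c}_i}[l])\mid 1\leq i\leq h\}$. Sorting these $2h$ numbers in nondecreasing order and applying Lemma~\ref{le:h-weight} yields the chain $\wt(\bm{c}_1[l])\leq\wt(\cev{\bm{c}_1}[l])\leq\cdots\leq\wt(\bm{c}_h[l])\leq\wt(\cev{\bm{c}_h}[l])$, so the number in sorted position $2i-1$ must equal $\wt(\bm{c}_i[l])$ and the one in position $2i$ must equal $\wt(\cev{\bm{c}_i}[l])$. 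Repeating this for every such $l$ recovers $\wt(\bm{c}_i[l])$ and $\wt(\cev{\bm{c}_i}[l])$ for all $i\in\{1,\ldots,h\}$ and all $l\leq\lceil n/2\rceil$.

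With these weights in hand, the first $\lceil n/2\rceil$ bits of each $\bm{c}_i$ are recovered by taking consecutive differences $c_{i,l}=\wt(\bm{c}_i[l])-\wt(\bm{c}_i[l-1])$ with $\wt(\bm{c}_i[0]):=0$, and the last $\lceil n/2\rceil$ bits are recovered analogously from the suffix-weight sequence. Since $2\lceil n/2\rceil\geq n$, these two halves cover every coordinate of $\bm{c}_i$, so $\bm{c}_i$ is fully determined. Finally, $\phi_k$ places the bits of $\bm{z}_i$ at a fixed and known set of coordinates within $\bm{c}_i$ (the slots originally occupied by $w_{2i-1,j}$ and $w_{2i,j}$, with the coordinate adjustment from \eqref{eq:multi-free-odd} for odd $k$), so extracting $\bm{z}_i$ from $\bm{c}_i$ is merely a matter of reading off these designated positions. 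The argument is essentially routine once Lemma~\ref{le:h-weight} is available; the only point worth flagging is that ties in the sorted list are harmless, because any two equal entries correspond to numerically identical values of $\wt(\bm{c}_i[l])$, so the differences $c_{i,l}$ are well defined regardless of how ties are broken. The real work is bundled into the earlier combinatorial arguments establishing Lemma~\ref{le:h-weight-even-lenght} and its odd-length extension, so there is no significant remaining obstacle here.
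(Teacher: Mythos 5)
Your proposal is correct and follows essentially the same route as the paper's proof: use Lemma~\ref{le:h-weight} to sort and label the $2h$ masses in each $X_l$ for $l\leq\lceil n/2\rceil$ as the prefix and suffix weights of the individual $\bm{c}_i$, recover each $\bm{c}_i$ by consecutive differences from both ends, and read $\bm{z}_i$ off the interleaved coordinates (the paper states these explicitly as positions $h+1,2(h+1),\ldots,k(h+1)$). Your remark about ties being harmless is a fair clarification the paper leaves implicit.
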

		
		\begin{proof}
			Let $n=k(h+1)$ and denote $X=M(\bm{c}_1,\ldots,\bm{c}_h)$ where $X=(X_l)_{1\leq l\leq n}$. Note that $X_l$ is a multiset of size $2h$. Let us write $X_l=\{(a_{l,1},b_{l,1}),(\bar{a}_{l,1},\bar{b}_{l,1})\ldots,(a_{l,h},b_{l,h}),(\bar{a}_{l,h},\bar{b}_{l,h})\}$ where $b_{l,1}\leq\bar{b}_{l,1} \ldots \leq b_{l,h}\leq \bar{b}_{l,h}$. By Lemma~\ref{le:h-weight}, for any $1\leq l\leq \lceil n/2\rceil$, it holds that
			\begin{align*}
				\wt(\bm{c}_1[l])\leq\wt(\cev{\bm{c}_1}[l])
				\leq \ldots \leq \wt(\bm{c}_h[l]) \leq \wt(\cev{\bm{c}_h}[l]).
			\end{align*}
			Therefore, for any $1\leq l\leq\lceil n/2\rceil$, we have $b_{l,i}=\wt(\bm{c}_i[l])$ and $\bar{b}_{l,i}=\wt(\cev{\bm{c}_i}[l])$ where $i=1,\ldots,h$.
			Define $b_{0,i}=\bar{b}_{0,i}=0$. Let $\bm{c}_i=(c_{i,1},\ldots,c_{i,n})$. Then we have $c_{i,l}=b_{l,i}-b_{l-1,i}$ for $l=1,\ldots,\lceil n/2\rceil$ and $c_{i,n-l+1}=\bar{b}_{l,i}-\bar{b}_{l-1,i}$ for $l=1,\ldots,\lfloor n/2\rfloor$. Therefore, we can recover $\bm{c}_1,\ldots,\bm{c}_h$ from $M(\bm{c}_1,\ldots,\bm{c}_h)$. Moreover, by construction of $\bm{c}_i$, we have $\bm{z}_i=(c_{i,h+1},c_{i,2(h+1)},\ldots,c_{i,k(h+1)})$ for all $i=1,\ldots,h$.
		\end{proof}
	
		\begin{remark}
			The rate of the code $\{\phi_k(\bm{z}_1,\ldots,\bm{z}_h) \mid  \bm{z}_i\in\{0,1\}^k; i=1,\ldots,h\}$ is $1/(h+1)$. As a comparison, the codes presented in \cite{gabrys2022reconstruction} have rate asymptotically equal to $1/h$ but can only afford reconstruction of distinct strings. In contrast, our construction here allows for reconstruction of arbitrary strings. Moreover, the strings can be reconstructed in time linear in $n$, assuming $h$ is a constant independent of $n$. 
		\end{remark}
		
		Now if we assume the string $\bm{z}_1,\ldots,\bm{z}_h$ are not arbitrary strings but codewords that belong to binary codes with good error-correcting capability, then reconstruction of $\bm{z}_1,\ldots,\bm{z}_h$ from erroneous prefix-suffix compositions of $(\bm{c}_1,\ldots,\bm{c}_h)=\phi_k(\bm{z}_1,\ldots,\bm{z}_h)$ would be straightforward. This is because one can identify erroneous versions of the strings $\bm{c}_1,\ldots,\bm{c}_h$ from the erroneous compositions, and the structure of these erroneous strings lends itself for the reconstruction of $\bm{z}_1,\ldots,\bm{z}_h$.  
		\begin{theorem}
			Assume $k\geq 1$. Let $(\bm{c}_1,\ldots,\bm{c}_h)=\phi_k(\bm{z}_1,\ldots,\bm{z}_h)$ where $\bm{z}_1,\ldots,\bm{z}_h\in\cA_k$ with $\cA_k$ being the code of length $k$ in Lemma~\ref{le:good} that is able to correct $4t$ errors. If $Y\in\fM_n$ is any erroneous prefix-suffix compositions resulting from at most $t$ errors in $M(\bm{c}_1,\ldots,\bm{c}_h)$, then one can recover $\bm{z}_1,\ldots,\bm{z}_h$ from $Y$.
		\end{theorem}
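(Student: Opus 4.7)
The plan is to combine the multi-string reconstruction procedure from Theorem~\ref{thm:multi-error-free} with a Proposition~\ref{prop:string}-style argument that translates composition errors into a bounded number of bit errors, and then use $\cA_k$ to clean up the residual errors in each $\bm{z}_i$. Let $n=k(h+1)$.

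As in the proofs of Lemma~\ref{le:basic} and Theorem~\ref{thm:multi-error-free}, first assume without loss of generality that $|Y_l|=2h$ for every $l$; any position violating this is already in error, so trimming or padding $Y_l$ does not decrease $d(M(\bm{c}_1,\ldots,\bm{c}_h),Y)$. For every $l\le \lceil n/2\rceil$, sort the compositions in $Y_l$ by mass and denote the resulting values by $b_{l,1}\le\bar b_{l,1}\le\ldots\le b_{l,h}\le\bar b_{l,h}$. By Lemma~\ref{le:h-weight}, whenever $Y_l=X_l$, these sorted masses coincide with $\wt(\bm{c}_1[l]),\wt(\cev{\bm{c}_1}[l]),\ldots,\wt(\bm{c}_h[l]),\wt(\cev{\bm{c}_h}[l])$. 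Setting $b_{0,i}=\bar b_{0,i}=0$, for each string index $i$ define an estimate $\bm{t}_i=(t_{i,1},\ldots,t_{i,n})\in\{0,1\}^n$ of $\bm{c}_i$ by $t_{i,l}:=(b_{l,i}-b_{l-1,i})\bmod 2$ for $l=1,\ldots,\lceil n/2\rceil$ and $t_{i,n-l+1}:=(\bar b_{l,i}-\bar b_{l-1,i})\bmod 2$ for $l=1,\ldots,\lfloor n/2\rfloor$.

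The heart of the argument is to bound $d(\bm{c}_i,\bm{t}_i)$ for each $i$. For any $l\le\lceil n/2\rceil$ with $Y_l=X_l$, both $b_{l,i}$ and $\bar b_{l,i}$ equal the true prefix and suffix weights of $\bm{c}_i$. Hence each of the at most $t$ positions $l$ with $Y_l\ne X_l$ can perturb only $b_{l,i}$ and $\bar b_{l,i}$ (from the viewpoint of string $i$); the first affects at most the two consecutive bits $t_{i,l},t_{i,l+1}$ in the prefix half, while the second affects at most the two consecutive bits $t_{i,n-l+1},t_{i,n-l}$ in the suffix half. Therefore $d(\bm{c}_i,\bm{t}_i)\le 4t$ for every $i$. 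Extract $\tilde{\bm{z}}_i$ from $\bm{t}_i$ by selecting the bit positions that $\phi_k$ uses to encode the bits of $\bm{z}_i$ into $\bm{c}_i$; since subsequence extraction cannot increase Hamming distance, $d(\bm{z}_i,\tilde{\bm{z}}_i)\le 4t$. Because $\bm{z}_i\in\cA_k$ and $\cA_k$ corrects $4t$ errors, applying its decoder to $\tilde{\bm{z}}_i$ recovers $\bm{z}_i$ exactly.

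The main subtlety I expect is the error accounting in the bound $d(\bm{c}_i,\bm{t}_i)\le 4t$: a single bad position $l$ may, after sorting, cascade and alter all $2h$ sorted masses simultaneously, so one must argue carefully that for each individual $i$ only the two specific entries $b_{l,i}$ and $\bar b_{l,i}$ feed into $\bm{t}_i$, and each affects only two consecutive bits. Once this per-string localization is made precise, everything else follows by invoking Lemma~\ref{le:h-weight} and the decoding guarantee of $\cA_k$ from Lemma~\ref{le:good}.
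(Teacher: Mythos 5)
Your proof is correct and follows essentially the same approach as the paper's: use Lemma~\ref{le:h-weight} to associate the $(2i-1)$-th and $2i$-th sorted masses at each position with the prefix and suffix weights of $\bm{c}_i$, propagate the at most $t$ composition errors to at most $2t$ wrong mass values per string and hence at most $4t$ wrong bits in the estimate of $\bm{c}_i$ (the Proposition~\ref{prop:string}-style doubling), restrict to the information coordinates, and invoke the $4t$-error decoder of $\cA_k$. The only cosmetic difference is that you build a full-length estimate $\bm{t}_i\in\{0,1\}^n$ and then extract the information coordinates, whereas the paper defines $\bm{t}_i$ directly on those coordinates; the per-string localization concern you flag is indeed the right thing to notice, and you resolve it the same way the paper does.
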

		\begin{proof}
			Denote $n=k(h+1)$. Without loss of generality, we may assume $|Y_l|=2h$ for all $l=1,\ldots,n$. Let us write $Y_l=\{(a_{l,1},b_{l,1}),(\bar{a}_{l,1},\bar{b}_{l,1})\ldots,(a_{l,h},b_{l,h}),(\bar{a}_{l,h},\bar{b}_{l,h})\}$ where $b_{l,1}\leq\bar{b}_{l,1} \ldots \leq b_{l,h}\leq \bar{b}_{l,h}$.
			Define $b_{0,i}=\bar{b}_{0,i}=0$ for $i=1,\ldots,h$. Moreover, for $i=1,\ldots,h$ let $t_{i,l}:=(b_{l,i}-b_{l-1,i})\bmod 2$ where $l=1,\ldots,\lceil n/2\rceil$ and $t_{i,n-l+1}:=(\bar{b}_{l,i}-\bar{b}_{l-1,i})\bmod 2$ where $l=1,\ldots,\lfloor n/2\rfloor$. Define $\bm{t}_i=(t_{i,h+1},t_{i,2(h+1)},\ldots,t_{i,k(h+1)})$. Observe that if $\max\{d(\bm{t}_i,\bm{z}_i)\mid 1\leq i\leq h\}\leq 4t$, then we are able to recover $\bm{z}_1,\ldots,\bm{z}_h\in\cA_k$ since $\cA_k$ can correct $4t$ errors. So it remains to bound the Hamming distance between $\bm{t}_i$ and $\bm{z}_i$.
			
			 By Lemma~\ref{le:h-weight}, for any $1\leq l\leq \lceil n/2\rceil$, it holds that
			\begin{align*}
				\wt(\bm{c}_1[l])\leq\wt(\cev{\bm{c}_1}[l])
				\leq \ldots \leq \wt(\bm{c}_h[l]) \leq \wt(\cev{\bm{c}_h}[l]).
			\end{align*}
			Since $d(M(\bm{c}_1,\ldots,\bm{c}_h),Y)\leq t$, the Hamming distance between $(\wt(\bm{c}_i[l]))_{1\leq l\leq \lceil n/2\rceil}$ and $(b_{l,i})_{1\leq l\leq \lceil n/2\rceil }$ is at most $t$ for all $i$. Similarly, the Hamming distance between $(\wt(\cev{\bm{c}_i}[l]))_{1\leq l\leq \lfloor n/2\rfloor}$ and $(\bar{b}_{l,i})_{1\leq l\leq \lfloor n/2\rfloor }$ is at most $t$ for all $i$. 
			Therefore, there are at most $2t$ errors in $(b_{l,i})_{1\leq l\leq n}$. Similar to the observation in Proposition~\ref{prop:string}, since each bit of $\bm{t}_i$ is the difference of two consecutive bits of $(b_{l,i})_{1\leq l\leq n}$, there are at most $4t$ errors in $\bm{t}_i$. Namely, $d(\bm{t}_i,\bm{z}_i)\leq 4t$ for all $i$.
		\end{proof}
		
		\begin{remark}
			Assuming $h$ is a constant independent of $k$, the code $\{\phi_k(\bm{z}_1,\ldots,\bm{z}_h) \mid  \bm{z}_i\in\cA_k; i=1,\ldots,h\}$ has constant rate and can correct a constant fraction of composition errors relative to $n$.
		\end{remark}
		
		\section{Concluding remarks}\label{sec:re}
		
		We have presented several efficiently constructable and decodable codes that allow reconstruction of a single string from its erroneous prefix-suffix compositions. The trade-off between the amount of redundancy and the maximum number of composition errors that can be corrected for each coding scheme is summarized in Table~\ref{tab:comp}.
		
		In addition, we have also presented coding methods for reconstructing $h$ strings from their error-free or erroneous prefix-suffix compositions. For the error-free case, the rate of our constructed codes equals $1/(h+1)$. As for the erroneous case, the corresponding codes have constant rate and can correct a constant fraction of composition errors relative to the string length $n$. In both cases, our codes are efficiently constructable and decodable. 
		
		We note that there are a few directions that are interesting for further investigation. For the case of reconstructing $h$ strings, it is possibly necessary for the code schemes to have $h$-fold coding overhead if one insists on reconstructing \emph{any} $h$ information strings. However, it may be possible to sacrifice a negligibly small set of information strings, allowing them to fail in the reconstruction, in order to reduce the coding overhead. Moreover, the strings we used for interleaving are of the same length, which is more than what is needed for reconstruction of specific strings. Therefore, it may be worth looking into variable-length codes to see if they can have reduced redundancy. Separately, our definition of a single composition error may be too strong in the sense that if a single composition is in error then \emph{all} compositions of the same size as it are assumed to be in error, and thus their information is simply disregarded. However, it is natural to expect a closer inspection of the admissible patterns of composition errors will lead to coding schemes with improved error-correcting capability as well as reduced redundancy.
		
		\begin{table}[ht]
			\renewcommand{\arraystretch}{1.3}
			\centering
			\begin{tabular}{|c|c|c|}
				\hline
				Coding scheme  & Redundancy & Maximum correctable errors\\
				\hline\hline
				Construction~\ref{con:ext} & $O(n)$ & $t=\Theta(\sqrt{n})$ \\
				\hline
				Construction~\ref{con:encode} & $O(\log^2 n)$ & $t=\Theta(\log n)$ \\
				\hline
				Construction~\ref{con:good} & $O(n)$ & $t=\Theta(n)$ \\
				\hline
			\end{tabular}
			\caption{A comparison of the length-$n$ codes that allow reconstruction of a single string in this work.}
			\label{tab:comp}
		\end{table}

\bibliographystyle{IEEEtran}
\bibliography{CompositionErrors}

\end{document}